%% file: main.tex
\documentclass[12pt]{article}
\pdfoutput=1

\usepackage{amsmath,amssymb,amsfonts}
\usepackage{longtable}
\usepackage{tabularx}

\usepackage{float}

\usepackage{amsthm}

\usepackage{a4wide,epsfig,psfrag,scalefnt}
\usepackage[dvipsnames]{xcolor}
\usepackage{braket}
\usepackage{placeins}

\usepackage{tensor}

\usepackage{slashed}
\usepackage{enumitem}
\usepackage{cite}
\usepackage[numbers,sort&compress]{natbib}
\usepackage{caption}
\usepackage{subcaption}
\usepackage{verbatim}
\usepackage{xcolor}
\usepackage{physics}

\usepackage{pdfpages}
\usepackage{diagbox}
\usepackage{array}

\usepackage{hyperref}

\graphicspath{ {figs/} }

\input{commands}

\allowdisplaybreaks

\begin{document}

\title{\vskip-3cm{\baselineskip14pt
    \begin{flushleft}
     \normalsize CERN-TH-2026-xxx, DESY-26-013
    \end{flushleft}} \vskip2.5cm
  Heavy-quark production in deep-inelastic scattering \\
  -- Mellin moments of structure functions -- 
  }

\author{
  Marco Klann$^{a}$,
  Sven-Olaf Moch$^{a}$, 
  Kay Sch\"onwald$^{b}$
  \\[3ex]
  {\small\it (a) II. Institut f\"ur Theoretische Physik, Universit\"at Hamburg, D-22761 Hamburg, Germany}
  \\
  {\small\it (b) Theoretical Physics Department, CERN, 1211 Geneva 23, Switzerland}
}

\date{}

\maketitle

\begin{abstract}

\noindent
We compute Mellin moments of the heavy‑quark structure functions in deep-inelastic scattering at next‑to‑leading order in quantum chromodynamics, retaining their full dependence on the heavy‑quark mass. Using the optical theorem and the operator product expansion, we derive analytic results for fixed Mellin moments $N = 2$ to 22 of the structure functions $F_2$ and $F_L$. 
Our results reproduce the known expressions in the relevant asymptotic limits, in particular for 
virtualities of the exchanged photon $Q^2$ much larger than the heavy‑quark mass squared $m^2$, and are in agreement with existing parametrisations of the next‑to‑leading‑order coefficient functions.
The computational set‑up developed in this work also provides a direct pathway toward extending these calculations to next‑to‑next‑to‑leading order.

\vspace*{5em}

\end{abstract}

\thispagestyle{empty}

\thispagestyle{empty}

\newpage

\section{Introduction}
\label{sec:intro}

Heavy‑quark production in deep‑inelastic scattering (DIS) plays a central role in testing quantum chromodynamics (QCD) and in determining parton distribution functions (PDFs) at high energies. 
In neutral‑current (NC) DIS, heavy quarks such as charm or bottom are produced predominantly through photon–gluon fusion, in which a highly virtual photon with momentum $q$ interacts with a gluon inside the nucleon. 
The presence of the heavy‑quark mass $m$ introduces a hierarchy of scales, making these processes a sensitive probe of perturbative QCD. 
Measurements of charm and bottom production at the HERA collider (see Ref.~\cite{H1:2018flt}) have established heavy‑quark DIS as an essential component of precision QCD phenomenology and have provided stringent constraints on the gluon PDF down to small values of the Bjorken scaling variable $x$. 
Further high‑precision data are anticipated from the future Electron–Ion Collider (EIC) (see Ref.~\cite{AbdulKhalek:2021gbh}), which will explore a similar kinematic range while significantly deepening the experimental understanding of heavy‑flavour production.

The cross section for unpolarised heavy-quark production in DIS for NC exchange by photons of virtuality $Q^2 = -q^2 > 0$ is expressed in terms of the heavy-flavour structure functions $F_k(x,Q^2,m^2)$ with $k=2,L$. 
Standard collinear QCD factorisation expresses the structure functions as convolutions of the PDFs $f_{i}^{}$  with the corresponding coefficient functions $C_{k, i}^{}$.
The latter depend on the renormalisation scale $\mu_r$ and the factorisation scale $\mu_f$, which control the running of the strong coupling and the separation between long‑ and short‑distance physics, respectively. Thus,
\begin{equation}
  \label{eq:totalF2c}
  F_k^{}(x,Q^2,m^2) \;=\; 
  \sum\limits_{i \,\in\, \{ {\rm q,{\bar{q}},g} \}} \,\,
  \int_{x}^{\,z^{\,\rm max}}
  {dz \over z} \: f_{i}^{}\left({x \over z},\, \mu_f^2 \right)\,
  C_{k, i}^{}\left(z,\,Q^2,\,m^2,\,\mu_f^2,\,\mu_r^2 \right)
  \; ,
\end{equation}
where $z$ is the parton momentum fraction bounded by $z^{\,\rm max} = 1/(1 + 4\:\! m^2/Q^2)$.
The partonic kinematic variables (including the partonic centre-of-mass energy $s$)
are defined as follows:
\begin{align}
\label{eq:kinematics}
s = Q^2\left(\frac{1}{z}-1\right)\, , \quad
\xi = \frac{Q^2}{m^2}\, , \quad
\eta = \frac{s}{4m^2} - 1 \, ,
\end{align}
see for instance Ref.~\cite{Riemersma:1994hv}.
In this approach, the heavy quarks appear only as final‑state particles, and leading‑twist factorization ensures that the kinematic conditions for intrinsic charm are not satisfied (see e.g., Ref.~\cite{Blumlein:2015qcn}).

QCD perturbation theory provides a systematic expansion of observables in powers of the strong coupling $\alpha_s$, allowing the DIS coefficient functions in Eq.~(\ref{eq:totalF2c}) to be computed with high precision. 
In the heavy‑quark sector, this requires retaining the full dependence on the heavy‑quark mass, and the complete QCD corrections to these coefficient functions are known at next‑to‑leading order (NLO) in QCD in numerical form; see Refs.~\cite{Laenen:1992zk, Riemersma:1994hv, Harris:1995tu}.
These results have also been used to construct a parametrization in Mellin space; see Ref.~\cite{Alekhin:2003ev}.
Analytic results at NLO have first been presented in the limit $Q^2 \gg m^2$ (see Ref.~\cite{Buza:1995ie}) 
followed by Refs.~\cite{Blumlein:2016xcy,Blumlein:2019qze}, which contain the complete results with full mass dependence for the quark-flavour non-singlet and pure-singlet coefficient functions in Eq.~(\ref{eq:totalF2c}).
Analytic results with full mass dependence for the NLO gluon coefficient functions $C_{k,{\rm g}}$ 
are unknown.

Current global PDF fits (see, e.g. Ref.~\cite{Alekhin:2017kpj}) show that the precision of heavy‑flavour measurements from HERA, together with the anticipated accuracy of future EIC data, requires substantially improved theoretical predictions. 
The experimental uncertainties are now small enough that NLO accuracy no longer matches the available precision, especially in regions sensitive to the small‑$x$ gluon PDFs. 
To fully exploit these data, next‑to‑next‑to‑leading‑order (NNLO) QCD calculations are required. 
At NNLO accuracy there are now complete results in the limit $Q^2 \gg m^2$ available (see Refs.~\cite{Ablinger:2025joi,Ablinger:2025awb} building on Refs.~\cite{Ablinger:2010ty,Ablinger:2014lka,Behring:2014eya,Ablinger:2014vwa,Ablinger:2014nga,Ablinger:2019etw,Behring:2021asx,Blumlein:2021xlc,Ablinger:2022wbb,Ablinger:2023ahe,Ablinger:2024xtt,Behring:2025avs}), which can be combined with other partial information on the  coefficient functions in the threshold region $s \simeq 4\:\! m^2$,
cf. Refs.~\cite{Laenen:1998kp,Kawamura:2012cr} and in 
the high-energy regime $s \gg m^2$, cf. Ref.~\cite{Catani:1990eg}.

We aim to extend the heavy‑quark DIS coefficient functions to NNLO, and as a first step toward this goal we compute Mellin moments of the DIS structure functions while retaining full heavy‑quark mass dependence. 
To set up the formalism, we report here on the calculation of these moments at NLO and confirm agreement with existing results.
In particular, this allows us to verify the accuracy of the parametrisations of the NLO coefficient functions provided in Ref.~\cite{Riemersma:1994hv}.

The article is organised as follows. In Sec.~\ref{sec:set-up} we set the stage and lay the groundwork for the calculation of the heavy-quark DIS coefficient functions at NLO in QCD. 
In Sec.~\ref{sec:hq-mellin-moments} we compute their Mellin moments for fixed values of $N$ using the harmonic projection and an expansion of the forward Compton amplitude in the external momenta. 
This approach encounters bottlenecks for large $N$ in the integral reductions, so that we also adopt an alternative method, performing the integral reductions without relying on a momentum expansion. 
Sec.~\ref{sec:results} presents the results and discusses the various checks performed. 
We conclude in Sec.~\ref{sec:conclusions} with an outlook on the next steps toward NNLO accuracy. 
Appendix~\ref{sec:appA} provides details on the flavour factors relevant for the heavy‑quark structure functions.

\section{Set-up}
\label{sec:set-up}

\subsection{Hadronic Tensor}

This section and the subsequent ones lay the groundwork for the calculation of fixed Mellin moments of the coefficient functions.
It closely follows the presentation in Refs.~\cite{Moch:1999eb,Larin:1996wd}. 
For a more detailed review on the hadronic tensor, operator product expansion and renormalisation the reader is referred to Ref.~\cite{Buras:1979yt}.
To this end, the hadronic part of lepton-hadron scattering is considered, allowing for both light-quark fields $\psi_\mathrm{l}$ and heavy-quark fields $\psi_\mathrm{h}$.
These fields are coupled to external vector bosons, such as the photon, and the corresponding interactions are determined by a locally conserved current,
\begin{align}
	\tensor{J}{_\mu} = \bar{\psi}_\light Q^\vector_\light  \tensor{\gamma}{_\mu} \psi_\light + \bar{\psi}_\heavy Q^\vector_\heavy \tensor{\gamma}{_\mu} \psi_\heavy \, ,
\end{align}
where $Q^\vector_\light$ and $Q^\vector_\heavy$ denote the charge matrices for light- and heavy-quark flavours, respectively.

In the context of deep-inelastic scattering, the focus is typically on unpolarised hadronic states with momentum $p$ and spin $s$, representing particles such as the proton.
The spin-averaged hadronic matrix element of a product of vector currents is given by
\begin{align}
	\bra{p,\text{hadron} } \tensor{J}{^\dagger_\mu} (z) \tensor{J}{_\nu} (0) \ket{p,\text{hadron}} \equiv \frac{1}{\abs{\mathcal{S}}} \sum_{s\in \mathcal{S}} \bra{p,s,\text{hadron}} \tensor{J}{^\dagger_\mu} (z) \tensor{J}{_\nu} (0) \ket{p,s,\text{hadron} }\, .
\end{align}
Here, $\mathcal{S}$ denotes the set of all possible spin states of the hadron and 
$\abs{\mathcal{S}}$ is its cardinality.
The hadronic tensor is the Fourier transform of the spin-averaged matrix element
\begin{align}
   \tensor{W}{_{\mu\nu}}(p,q) = \frac{1}{4\pi} \int \mathrm{d}^4z \, e^{i q\cdot z} \bra{p}  \tensor{J}{^\dagger_\mu} (z) \tensor{J}{_\nu} (0) \ket{p}\, .
\end{align}
It should be noted that the hadronic tensor also depends on the heavy-quark mass $m$ and the strong coupling $\alpha_s$.
By convention, however, functional dependencies arising from parameters of the Lagrangian are suppressed unless they play an essential role in the discussion.

The hadronic tensor is constrained by Lorentz symmetry, parity, time-reversal invariance, and current conservation, and can therefore be written as
\begin{align}
    \tensor{W}{_\mu_\nu}(p,q) = \frac{1}{2x}
 \tensor{e}{_\mu_\nu} (p,q) F_L (p,q) + \frac{1}{2x} \tensor{d}{_\mu_\nu} (p,q) F_2 (p,q)\, ,
\end{align}
where the $F_i$ denote the structure functions, accompanied by symmetric and transverse tensors
\begin{subequations}
\begin{align}
     \tensor{e}{_\mu_\nu} (p,q) &=  \tensor{g}{_\mu_\nu} -\frac{\tensor{q}{_\mu} \tensor{q}{_\nu}}{\sp{q}{q}} \comma \LineNumber 
     \tensor{d}{_\mu_\nu} (p,q) &= - \frac{2\bjorken}{\sp{q}{q}} (\tensor{p}{_\mu} \tensor{q}{_\nu}+ \tensor{q}{_\mu} \tensor{p}{_\nu})- \frac{4\bjorken^2}{\sp{q}{q}}  \tensor{p}{_\mu} \tensor{p}{_\nu} -  \tensor{g}{_\mu_\nu} \period
\end{align}
\end{subequations}

\subsection{Forward Scattering Amplitude}
The exponential in the hadronic tensor becomes highly oscillatory in the deep-inelastic limit $-q^2 \to \infty$.
By the Riemann-Lebesgue lemma (see Refs.~\cite{rudin1987real,Muta:2010xua}), the hadronic tensor is then dominated by contributions on the light cone, $z^2 = 0$, which is also referred to as light-cone dominance (see Refs.~\cite{Jaffe:1972uq,Brandt:1970kg}).

The hadronic tensor can be related to the discontinuity of the forward scattering amplitude. 
Assuming that the momentum transfer $q$ is space-like and that the hadronic tensor is sufficiently well-behaved for the Schwarz reflection principle (or suitable generalisations) to hold, it can be expressed in terms of the imaginary part of the forward scattering amplitude.
For a modern account of the relation between imaginary parts and discontinuities, the reader is referred to Ref.~\cite{Hannesdottir:2022bmo}.
This relation is an instance of the optical theorem
\begin{align}
   \tensor{W}{_{\mu\nu}}(p,q) = \frac{1}{2\pi} \Im \, \tensor{T}{_{\mu\nu}}(p,q) \comma
\end{align}
where the forward scattering amplitude is defined (using time-ordering $\mathcal{T}$) by
\begin{align}
    \tensor{T}{_{\mu\nu}}(p,q) =
    i \int \mathrm{d}^4z \, e^{i q\cdot z} \bra{p}  \mathcal{T} \{ \tensor{J}{^{\dagger}_\mu}(z) \tensor{J}{_\nu}(0) \} \ket{p} \period 
\end{align}

Through the optical theorem, the hadronic tensor is directly connected to the operator product expansion of the forward scattering amplitude.
At leading twist, the forward amplitude admits an expansion in terms of local operators, with the tensor structure once more constrained by Lorentz symmetry, parity, and time-reversal invariance,
\begin{align}
    \tensor{T}{_\mu_\nu}
    &= \sum_{N} \sum_{k \in \set{a,\psq,\psg}} 
     \left( \frac{2}{Q^2}\right)^N 
    \Bigg[\left( \tensor{g}{_\mu_\nu} - \frac{\tensor{q}{_\mu} \tensor{q}{_\nu} }{q^2} \right) \tensor{q}{_{\lambda_1}} \tensor{q}{_{\lambda_2}} \coef{N}{L}[k] (q) \LineNoNumber
    &\qquad + \left( \tensor{g}{_\mu_{\lambda_2}} \tensor{q}{_{\lambda_1}} \tensor{q}{_\nu}  + \tensor{g}{_{\lambda_1}_\nu} \tensor{q}{_\mu} \tensor{q}{_{\lambda_2}} - \tensor{g}{_\mu_\nu} \tensor{g}{_{\lambda_1}_{\lambda_2}} \sp{q}{q} - \tensor{g}{_{\lambda_1}_{\lambda_2}} \tensor{q}{_\mu} \tensor{q}{_\nu}  \right)  \coef{N}{2}[k] (q) \left(\sp{q}{q},m^2 \right) \Bigg] \LineNoNumber
    &\qquad \times  \vectors{q}{_}{3}{N} \bra{p} \tensor{O}{_k^{\{\lambda_1}^\dots^{\lambda_N\}}} \ket{p} 
\, .
\end{align}
Here, braces denote a completely symmetric and traceless combination of indices.
The outer sum runs over all spin values, while the inner sum iterates over all local, symmetric, traceless, twist-two operators of a given spin.
More concretely, these operators are divided into quark non-singlet, quark singlet, and gluon operators
\begin{subequations}
\begin{align}
    \tensor{\mathcal{O}}{_a^{\{\mu_1\dots\mu_N\}}} &= \bar{\psi}_{\light} \lambda^a_\light \tensor{\gamma}{^\{^{\mu_1}} i\tensor{D}{^{\mu_2}} \cdots i\tensor{D}{^{\mu_N}^\}} \psi_{\psq} \comma \LineNumber
    \tensor{\mathcal{O}}{_\psq^{\{\mu_1\dots\mu_N\}}} &= \bar{\psi}_{\light} \tensor{\gamma}{^\{^{\mu_1}} i\tensor{D}{^{\mu_2}} \cdots i\tensor{D}{^{\mu_N}^\}} \psi_{\light} \comma  \LineNumber
    \tensor{\mathcal{O}}{_\psg^{\{\mu_1\dots\mu_N\}}} &= \tensor{G}{^{\{\lambda \mu_1}} i\tensor{D}{^{\mu_2}} \cdots i\tensor{D}{^{\mu_{N-1}}} \tensor{G}{^{\mu_N \lambda \}}} \period
\end{align}
\end{subequations}
The $\lambda^a_l$ denote the generators of the global flavour symmetry group, and $D^{\mu}$ represents the ordinary gauge-covariant derivative.
Since hadrons are described within the parton model, where only massless partonic degrees of freedom are considered, the quark operators are restricted to the light-flavour sector and contributions from heavy-quark operators are omitted at this stage.

Under the assumption that the squared bosonic momentum transfer is much larger than the hadron mass squared, hadronic mass effects in the operator matrix element may be neglected.
In the leading-twist approximation, this gives
\begin{align}
    \bra{p} \tensor{O}{_k^{\{\lambda_1}^\dots^{\lambda_N\}}} \ket{p} 
    &= \vectors[ts]{p}{^}{1}{N} \ome{N}{k} (p) \LineNoNumber
    &= \tensor{p}{^{\phantom{\{} \lambda_{1}}} \dots \tensor{p}{^{\phantom{\{} \lambda_{N}}} \ome{N}{k}(p) + \order{\sp{p}{p}} \, .
\end{align}
These operator matrix elements are not accessible within perturbative QCD, but this is not an obstacle, as the primary interest lies in the coefficients of the operator product expansion of the forward scattering amplitude.
Since the expansion is valid at the operator level, these coefficients are universal and independent of the external states.
Further details are provided in the discussion of renormalisation in the subsequent section.

Upon substituting the operator matrix elements into the operator product expansion of the forward scattering amplitude and performing some basic algebraic manipulations, one finds
\begin{align}
    \tensor{T}{_\mu_\nu}(p,q) &= \sum_{N} \sum_{k \in \set{a,\psq,\psg}} 
    \left( \frac{1}{x} \right)^N 
    \ome{N}{k} (p) \Bigg[ \tensor{e}{_\mu_\nu} \coef{N}{L}[k](q) + \tensor{d}{_\mu_\nu} \coef{N}{2}[k](q)   \Bigg] + \order{p^2} 
\, .
\end{align}
In the leading twist approximation, the forward scattering amplitude may be decomposed in direct analogy to the hadronic tensor, 
\begin{align}
    \tensor{T}{_{\mu\nu}} (p,q) = \tensor{e}{_{\mu\nu}} (p,q) T_L (p,q) + \tensor{d}{_{\mu\nu}} (p,q) T_2 (p,q) + \mathcal{O}(p^2) \period
\end{align}
The hadronic invariants are defined as follows
\begin{subequations}
\begin{align}
    T_k (p,q) &= \tensor{\mathcal{P}}{_{\text{S},k}^{\mu\nu}} (p,q) \tensor{T}{_{\mu\nu}} (p,q) \comma \qquad k\in \{2,L\} \comma
\end{align}
\end{subequations}
where the projectors are given by
\begin{subequations}
\begin{align}
\mathcal{P}_{\mathrm{S},L}^{\mu\nu} (p,q) &= - \frac{\sp{q}{q}}{\sp[2]{p}{q}} \tensor{p}{^\mu}\tensor{p}{^\nu} \comma \LineNumber
\mathcal{P}_{\mathrm{S},2}^{\mu\nu}  (p,q) &= - \frac{(D-1)}{(D-2)}\frac{\sp{q}{q}}{\sp[2]{p}{q}} \tensor{p}{^\mu}\tensor{p}{^\nu} - \frac{1}{(D-2)} \tensor{g}{^\mu^\nu} \period 
\end{align}
\end{subequations}
Having expressed the forward scattering amplitude in terms of the same tensor structures as the hadronic tensor, the hadronic invariants are directly related to the structure functions via the optical theorem,
\begin{align}
    \frac{1}{2\pi} \Im T_i (p,q) &= \frac{1}{2x} F_k (p,q)\comma \qquad i\in \set{2,L} \period
\end{align}
By exploiting the invariance of the forward scattering amplitude for vector–vector scattering under crossing of the external bosons, and by applying the inverse Mellin transform of the operator product expansion together with the optical theorem and crossing symmetry (see Ref.~\cite{Bros:1965kbd}), the hadronic invariants can be related to the structure functions
\begin{align}
     \integral{x}{0}{1} x^{N-2} F_{i} (p,q)
    &= \sum_{k\in\set{a,\psq,\psg}} 
    \ome{N}{k}(p) \coef{N}{i}[k](q)
    \comma \quad i \in \set{2,L} \period
\end{align}
Note that the above relation fixes only the even moments of the structure functions. However, by analytic continuation, all moments in the complex $N$-plane are determined, provided the even moments are known.

\subsection{Renormalisation}

Since the goal is to determine fixed Mellin moments of the coefficient functions from the hadronic invariants, the only remaining complication stems from the hadronic operator matrix elements.
As the operator product expansion of the forward scattering amplitude is valid at the operator level, the coefficient functions are independent of the choice of external states, with all hadronic information encoded solely in the operator matrix elements.

This allows the hadronic states to be replaced by partonic ones carrying momentum $p$, since partons are taken to be massless in contrast to hadrons.
The relevant partonic states consist of non-singlet and singlet quark configurations, as well as gluons, with the corresponding operator matrix elements given by
\begin{align}
    \bra{p,i} \tensor{\mathcal{O}}{_k^{\{\lambda_1}^\dots^{\lambda_N\}}} \ket{p,i} 
    &= \vectors[ts]{p}{^}{1}{N}  \ome{N}{k}[i] (p) \comma \qquad k \in \set{a,\psq,\psg}  \comma
\end{align}
which, in general, are divergent and require renormalisation, see Ref.~\cite{Collins:1984xc} for more details.
To regularise the singularities, dimensional regularisation (see Refs.~\cite{tHooft:1972tcz,Bollini:1972ui,Ashmore:1972uj,Cicuta:1972jf}) is employed in $D = 4 - 2\epsilon$ space–time dimensions.
In this framework, the bare operators are related to the renormalised ones through
\begin{align}
    \tensor{\mathcal{O}}{_k^{\{\lambda_1}^\dots^{\lambda_N\}}}
    &= \sum_{k\in\{\psq,\text{g}\}} Z_{kl} \tensor{\mathcal{O}}{_{k,\ren}^{\{\lambda_1}^\dots^{\lambda_N\}}} \comma \qquad 
    k \in \set{\psq,\psg} \comma
\end{align}
where the renormalisation constants $Z_{kl}$ generally induce mixing among the singlet operators.
The scale dependence of the renormalised operators is then governed by the renormalisation group equation,
\begin{align}
    0 &= \frac{\mathrm{d}}{\mathrm{d}\log \mu^2 } \mathcal{O}_{l}^{\{\lambda_1\dots\lambda_N\}} \LineNoNumber
    &= \dv{\log \mu^2 } \sum_{k\in \set{\psq,\psg}}\left(Z^{-1}\right)_{lk}\mathcal{O}_{k, \ren }^{\{\lambda_1\dots\lambda_N\}} \LineNoNumber
    &= \sum_{k\in \set{\psq,\psg} } \left[ \dv{\left(Z^{-1}\right)_{lk}}{\log \mu^2 } \mathcal{O}_{k, \ren }^{\{\lambda_1\dots\lambda_N\}} +\left(Z^{-1}\right)_{lk}\dv{\mathcal{O}_{k, \ren }^{\{\lambda_1\dots\lambda_N\}}}{\log \mu^2 } \right] \LineNoNumber
    &\equiv -\gamma_{lk} \mathcal{O}_{k, \ren }^{\{\lambda_1\dots\lambda_N\}} +\dv{\mathcal{O}_{l, \ren }^{\{\lambda_1\dots\lambda_N\}}}{\log \mu^2 } \period
\end{align}
Here, the fact that bare operators are independent of the renormalisation scale has been used, together with the relation
\begin{align}
    \delta_{lk} = \sum_{j\in \set{\psg,\psq}} Z_{lj}\left(Z^{-1} \right)_{jk} \; \Rightarrow \; 0 =  \sum_{j\in \set{\psg,\psq}} Z_{lj} \dv{\left(Z^{-1} \right)_{jk}}{\log  \mu^2 } + \sum_{j\in \set{\psg,\psq}} \dv{Z_{lj}}{\log  \mu^2 }\left(Z^{-1} \right)_{jk} \period
\end{align}
The renormalisation constants and the implicitly defined anomalous dimensions $\gamma$ are, in general, matrix–valued. 
Since the $Z_{ij}$ depend on the renormalisation scale $\mu^2$ only implicitly, one finds from the chain rule that
\begin{align}
    \gamma_{lk} = - \sum_{j\in \set{\psg,\psq}} \left(-\epsilon a_{s,\ren} + \beta\right) \pdv{Z_{lj} }{a_{s,\ren}}\left(Z^{-1}\right)_{jk} \comma
\end{align}
with the QCD beta function and the rescaled coupling,
\begin{align}
  a_{s,\ren} = \frac{\alpha_{s,\ren}}{4\pi} \period
\end{align}

All quanitites, the renormalisation constants, the anomalous dimensions, and the beta function admit power–series expansions in the strong coupling.
In addition, the renormalisation constants and anomalous dimensions also expand in the dimensional regulator $\epsilon$.
No positive powers of the regulator are assumed to occur in either expansion, since such terms amount to finite renormalisations.
This assumption is equivalent to working in the modified minimal–subtraction scheme (see Refs.~\cite{Bardeen:1978yd,tHooft:1973mfk,Weinberg:1973xwm}).
Because the anomalous dimensions are finite quantities, this choice of scheme reduces their expansion to a power series in the coupling only.
In this scheme, the expansions read 
\begin{subequations}
\begin{align}
    Z_{ij} &= Z^{(0,0)}_{ij} + \sum_{m=1}^{\infty} \sum_{n=1}^m Z^{(m,n)}_{ij} \frac{a_{s,\ren}^m}{\epsilon^n } \comma \LineNumber
    \gamma_{ij} &= \sum_{m=1}^{\infty} \gamma^{(m)}_{ij} a_{s,\ren}^m\comma \LineNumber
    \beta &= - \sum_{m=2}^{\infty} \beta_{m-2} a_{s,\ren}^m \period
\end{align}
\end{subequations}
For our calculation, we require the Mellin moments of the anomalous dimensions up to two loops (see Refs.~\cite{Gross:1973zrg,Floratos:1977au,Larin:1996wd, Moch:1999eb}) .
Similarly, the coefficients of the beta function $\beta_m$ are needed only up to two loops (see Refs.~\cite{Jones:1974mm,Caswell:1974gg,Egorian:1978zx}).

At zeroth order in the coupling, the operators are finite and do not mix, so no renormalisation is required.
Accordingly, the $Z$–factors may be chosen diagonal and normalised such that
\begin{align}
    Z_{ij} = \delta_{ij} + \order{\alpha_{s,\ren}} \period
\end{align}
With this normalisation, the renormalisation constants are fully determined by the anomalous dimensions and the beta function coefficients, allowing a recursive solution order by order in the strong coupling and the dimensional regulator $\epsilon$. One finds
\begin{subequations}
\begin{align}
    Z_{\text{qq}} &= 1 + a_{s,\ren} \frac{\gamma_{\text{qq}}^{(0)}}{\epsilon} + a_{s,\ren}^2 \Bigg( \frac{-{\beta_0} \gamma_{\text{qq}}^{(0)}+\gamma_{\text{gq}}^{(0)} \gamma_{\text{qg}}^{(0)}+\left(\gamma_{\text{qq}}^{(0)}\right)^2}{2 \epsilon^2}+\frac{\gamma_{\text{qq}}^{(1)}}{2 \epsilon}\Bigg) +\mathcal{O}(a_{s,\ren}^3)\LineNumber
    Z_{\text{qg}} &=  a_{s,\ren} \frac{\gamma_{\text{qg}}^{(0)}}{\epsilon} + a_{s,\ren}^2 \Bigg( \frac{-{\beta_0} \gamma_{\text{qg}}^{(0)}+\gamma_{\text{gg}}^{(0)} \gamma_{\text{qg}}^{(0)}+\gamma_{\text{qq}}^{(0)} \gamma_{\text{qg}}^{(0)}}{2 \epsilon^2}+\frac{\gamma_{\text{qg}}^{(1)}}{2 \epsilon} \Bigg) +\mathcal{O}(a_{s,\ren}^3)\LineNumber
    Z_{\text{gq}} &=  a_{s,\ren} \frac{\gamma_{\text{gq}}^{(0)}}{\epsilon} + a_{s,\ren}^2 \Bigg( \frac{-{\beta_0} \gamma_{\text{gq}}^{(0)}+\gamma_{\text{gg}}^{(0)} \gamma_{\text{gq}}^{(0)}+\gamma_{\text{qq}}^{(0)} \gamma_{\text{gq}}^{(0)}}{2 \epsilon^2}+\frac{\gamma_{\text{gq}}^{(1)}}{2 \epsilon} \Bigg) +\mathcal{O}(a_{s,\ren}^3) \LineNumber
    Z_{\text{gg}} &= 1 +  a_{s,\ren} \frac{\gamma_{\text{gg}}^{(0)}}{\epsilon} + a_{s,\ren}^2 \Bigg( \frac{-{\beta_0} \gamma_{\text{gg}}^{(0)}+\left( \gamma_{\text{gg}}^{(0)} \right)^2+\gamma_{\text{gq}}^{(0)} \gamma_{\text{qg}}^{(0)}}{2 \epsilon^2}+\frac{\gamma_{\text{gg}}^{(1)}}{2 \epsilon} \Bigg) +\mathcal{O}(a_{s,\ren}^3) 
\, .
\end{align}
\end{subequations}
Finally, to renormalise the operator product expansion of the forward amplitude, we replace the bare operators with their renormalised counterparts and, consequently, the operator matrix elements with the renormalised ones.
In the leading-twist approximation, this yields the renormalised partonic invariants,
\begin{align}
    T_{i,\text{p}} (p,q) &= \sum_N  
 \left( \frac{1}{x} \right)^N 
    \sum_{j,k} C_{i,j}^{N} Z_{jk} A^{\text{N,\ren}}_{\text{p},\text{k}} (p) \comma \qquad i \in \set{L,2} \period
\end{align}
This completes the renormalisation of the right-hand side of the forward scattering amplitude, namely its operator product expansion.

To determine the coefficient functions, the forward scattering amplitude must also be computed directly from Feynman diagrams.
These amplitudes require renormalisation as well; throughout the following, the minimal subtraction scheme will be used.

The first step is the renormalisation of the coupling, performed in a theory with $n_\light + n_\heavy$ active flavours, with $a_{s,\ren}^{(n_\light + n_\heavy)}$ denoting the corresponding coupling,
\begin{align}
    \hat{a}_{s,\ren} 
    &= a_{s,\ren}^{(n_\light + n_\heavy)} \left[ 1 - a_{s,\ren}^{(n_\light + n_\heavy)} \frac{1}{\epsilon} \left( \frac{11}{3} C_A - \frac{4}{3} T_F (n_\light + n_\heavy) \right) + \order{(a_{s,\ren}^{(n_\light + n_\heavy)})^2} \right] \period
\end{align}
However, since we perform mass factorisation only with $n_l$ active flavours, this discrepancy must be accounted for by the decoupling relation (see Ref.~\cite{Chetyrkin:1997un}),
\begin{align}
    a_{s,\ren}^{(n_\light + n_\heavy)} &= a_{s,\ren} \bigg\{ 1 + a_{s,\ren} n_\heavy T_F \bigg[ \frac{4}{3} \log \left(\frac{\mu^2}{m^2}\right) + \frac{2}{3} \epsilon \bigg( \log^2 \left(\frac{\mu^2}{m^2}\right) + \zeta_2 \bigg) \LineNoNumber
    &\qquad + \order{\epsilon^2} \bigg]+ \order{a_{s,\ren}^2} \bigg\} \period
\end{align}

For the field renormalisation, it is noted that the vector boson and scalar wave functions require no renormalisation, since QCD corrections to their propagators appear only at next-to-leading order in the respective couplings and are neglected at leading order,
\begin{align}
Z_\vector & = 1 + \order{g_\vector^2} \period
\end{align}

In contrast, gluon and light-quark propagators receive QCD corrections from heavy-quark loops, at one loop for gluons and at two loops for light quarks, making their wave, function renormalisation (see Refs.~\cite{Chen:2025iul,Bierenbaum:2009mv}) necessary.
\begin{subequations}
\begin{align}
    Z_{2} &= 1 + \hat{a}_{s,\ren}^2 \left( \frac{\mu^2}{m^2} \right)^{2\epsilon} C_F T_F n_\heavy \left[\frac{1}{\epsilon} - \frac{5}{6} + \epsilon \left( \frac{89}{36} + \zeta_2 \right) + \mathcal{O}(\epsilon^2) \right] 
    + \mathcal{O}  \left( \hat{a}_{s,\ren}^3 \right) \comma \LineNumber
    Z_{3} &= 1 + \hat{a}_{s,\ren} \left( \frac{\mu^2}{m^2} \right)^{\epsilon} T_F n_\heavy \left[ -\frac{4}{3\epsilon} - \epsilon \frac{2}{3} \zeta_2 + \epsilon^2 \frac{4}{9} \zeta_3 + \mathcal{O}(\epsilon^3) \right] 
    + \mathcal{O}  \left( \hat{a}_{s,\ren}^2 \right) \period
\end{align}
\end{subequations}
Moreover, since internal quarks are allowed to carry a mass, this mass must likewise be renormalised, which is accomplished via counterterm insertions,
\begin{align}
    \delta m = \hat{a}_{s,\ren}   \left( \frac{\mu^2}{m^2} \right)^{\epsilon} \left[ -\frac{3}{\epsilon} -4 - \epsilon \left( 8+\frac{3}{2} \zeta_2 \right) + \mathcal{O}(\epsilon^2) \right] + \mathcal{O}  \left( \hat{a}_{s,\ren} ^2 \right)
\, .
\end{align}

\section{Mellin moments for fixed values of $N$}
\label{sec:hq-mellin-moments}

\subsection{Harmonic Tensors}
We now introduce the harmonic tensors that form the foundation for harmonic projection, which we employ to extract the Mellin moments of the coefficient functions.
Harmonic tensors of rank $N$, constructed from a reference vector $q_{\mu}$ and the metric tensor $g_{\mu\nu}$, are defined as symmetric and traceless tensors, normalised such that
\begin{subequations} \label{eq:H_properties}
\begin{align}
  \tensor{H}{_{q}^{\mu_1}^{\ldots}^{\mu_i}^{\ldots}^{\mu_j}^{\ldots}^{\mu_N}} &= \tensor{H}{_{q}^{\mu_1}^{\ldots}^{\mu_j}^{\ldots}^{\mu_i}^{\ldots}^{\mu_N}} \comma \label{eq:H_symmetry} \LineNumber
  \tensor{g}{_{\mu_i}_{\mu_j}} \tensor{H}{_{q}^{\mu_1}^{\ldots}^{\mu_i}^{\ldots}^{\mu_j}^{\ldots}^{\mu_N}} &= 0 \comma\label{eq:H_trace} \LineNumber
  \tensor{q}{_{\mu_i}} \tensor{H}{_{q}^{\mu_1}^{\ldots}^{\mu_i}^{\ldots}^{\mu_n}} &=  q^2 \tensor{H}{_{q}^{\mu_1}^{\ldots}^{\hat{\mu}_i}^{\ldots}^{\mu_{N}}} \period \label{eq:H_normalisation}
\end{align}
\end{subequations}
Here, the circumflex denotes an omitted index.
With this normalisation, transversality is ensured automatically when the reference vector is light-like.
For an already transverse tensor, the same condition reduces upon Fourier transformation to a harmonic differential equation, establishing a direct link to a spin-$N$ polarisation tensor.

Symmetric, traceless tensors arise throughout physics, from multipole expansions in electrodynamics to gravitational-wave analysis (see Refs.~\cite{Jackson:1998nia, Thorne:1980ru}).
Our normalisation choice, however, differs from standard conventions as the one used in Ref.~\cite{Toth:2021cpx}.
Its purpose and motivation emerge from the explicit construction of a general harmonic tensor.
\begin{align} \label{eq:H_definition}
  \tensor{H}{_{q}^{\mu_1}^{\ldots}^{\mu_N}} = \sum_{K=0}^{\lfloor\frac{N}{2}\rfloor} (q^2)^{K} h^{N}_{2K} \sum_{\sigma / \sim} \tensor{g}{^{\sigma(\mu_1)}^{\sigma(\mu_2)}} \ldots \tensor{g}{^{\sigma(\mu_{2K-1})}^{\sigma(\mu_{2K})}} \tensor{q}{^{\sigma(\mu_{2K+1})}} \ldots \tensor{q}{^{\sigma(\mu_{N})}} \comma
\end{align}
where the sum runs over all permutations $\sigma$ of the indices, modulo those that are equivalent either by the symmetry of the metric tensor or by the commutativity of products, and the corresponding coefficients are defined by
\begin{align} \label{eq:H_coefficient}
  h^{N}_{2K} = (-1)^K 2^{N} \frac{ \gammaf{D-2} \gammaf{D/2-1+N-K}}{ \gammaf{D-2+N}\gammaf{D/2-1} } \period
\end{align}
This definition of the coefficients accounts for redundancies arising from the intrinsic symmetries of the metric tensor $\tensor{g}{_\mu_\nu}$ and products of the reference vector $\tensor{q}{_\mu}$.
Consequently, we sum only over permutations distinct under these equivalences, absorbing the combinatorial factors into the coefficients.

\subsection{Harmonic Projection}
To efficiently compute fixed Mellin moments of the coefficient functions, the method of projectors (see Refs.~\cite{ Gorishnii:1983su, Gorishnii:1986gn}) is used to project on individual terms in the operator product expansion. 
The harmonic projection operator is defined as 
\begin{align}
    \mathcal{P}_M(\cdot) = \frac{1}{2^M} \frac{\tensor{H}{_q^{\mu_1}^\ldots^{\mu_M}}}{M!} \frac{\partial^M}{\partial \tensor{p}{^{\mu_1}} \ldots  \partial \tensor{p}{^{\mu_M}}} \Bigg( \cdot \Bigg) \Bigg|_{p=0} \period
\end{align}
It has the property of yielding a non-zero result only when applied to a product of exactly $M$ vectors $p$.
To see this, consider $M \neq N$ and evaluate the action of $\mathcal{P}_M$ explicitly. Making use of the symmetry of the product and the defining properties of harmonic tensors, the indices can be relabelled as needed, leading to the derivative
\begin{align}
   \mathcal{P}_M (\vectors{p}{^}{1}{N}) &= \begin{cases}
       \frac{1}{2^M}  \frac{N!}{M!} \tensor{H}{_q^{\mu_1}^\ldots^{\mu_M}} \tensor{\delta}{^{\lambda_1}_{\mu_1}} \dots \tensor{\delta}{^{\lambda_M}_{\mu_M}} \vectors{\partial}{^}{M+1}{N} \Big|_{p=0} & \text{if } N < M  \\
       \frac{1}{2^M}  \frac{N!}{M!} \tensor{H}{_q^{\mu_1}^\ldots^{\mu_M}} \tensor{\delta}{^{\lambda_1}_{\mu_1}} \dots \tensor{\delta}{^{\lambda_M}_{\mu_M}} \vectors{p}{^}{M+1}{N} \Big|_{p=0} & \text{if } N > M \period
   \end{cases}
\end{align}
In the first case, the term vanishes trivially, since having more derivatives than factors reduces to a derivative of a constant, i.e. zero. In the second case, the projection vanishes as well, because at least one factor of $p$ remains. Thus, the projection is non-zero only when $M = N$; in particular, one has
\begin{align}
   \mathcal{P}_N (\vectors{p}{^}{1}{N}) 
   &= \frac{1}{2^M} \tensor{H}{_q^{\lambda_1}^\ldots^{\lambda_N}} 
\, ,
\end{align}
so that the result itself is a harmonic tensor. 
Consequently, the harmonic projection applied to products of momenta $p$ yields a harmonic tensor. In practice, the main interest lies in its application to the forward scattering amplitude. On the one hand, this amounts to applying it to the operator product expansion, i.e., to the hadronic invariants, where expressions of the form 
\begin{align}
    \mathcal{P}_M(T_{i}) 
    &= (-1)^M \sum_k \ome{N}{k} (p) \coef{N}{i}[k](q)\Bigg|_{p=0}
\end{align}
arise. 
Here it is used that by symmetry considerations the operator matrix elements depend only on $p^2$; thus, by the chain rule, derivatives acting on them do not reduce the total number of factors of $p$, so that all terms with derivatives acting on the operator matrix element vanish.

On the other hand, when evaluating the forward scattering amplitude through Feynman diagrams, the harmonic projection acts on scalar products involving the partonic momentum, especially those originating from the numerator structure of the Feynman rules. Further scalar products can appear in the denominators (see Ref.~\cite{Witten:1975bh}), as these may be expanded in powers of $p$, for example:
\begin{align*}
    \frac{1}{(p+q)^2-m^2} 
    &=  \frac{1}{\sp{q}{q}-m^2} \sum_{n=0}^\infty (-1)^n \left(\frac{2\sp{p}{q}}{\sp{q}{q}-m^2} \right)^n
\, .
\end{align*}
In this case, the application of the harmonic projection is conceptually straightforward, as it amounts to differentiating the integrands and contracting harmonic tensors.

The expansion, however, introduces the first bottleneck of the method when higher moments are considered. By the Leibniz rule, the number of generated terms grows with the number of edges carrying the momentum $p$, so for practical calculations this number should be minimised.

\subsection{Mass Factorisation}
With all necessary ingredients assembled, the calculation can now be set up.
The harmonic projection is solely concerned with the $p$–dependence of the operator matrix elements, without regard to the specific processes they describe.
Moreover, it commutes with renormalisation, since the counterterms are local and therefore independent of $p$.
Consequently, the result obtained for the harmonic projection of the hadronic invariants carries over directly to the renormalised partonic invariants, ensuring that the projection can be performed prior to specifying the external states,
\begin{align}
T_{i,\text{p}}^{\ren} (p,q) &= (-1)^N \sum_{j,k \in \set{a,\psq,\psg}}  \coef{N}{i}[j](q) Z_{jk} \ome{N,\ren}{\parton}[k](p) \bigg\vert_{p=0} \comma 
\qquad
i \in \set{L,2}, \parton \in \set{\psg,\psq} \period
\end{align}
Nullifying the partonic momentum affects only the operator matrix element.
In a theory with light flavours only, this matrix element would contribute solely at tree level after nullification of $p$, since all higher order terms in the coupling correspond to scaleless loop integrals, which vanish in dimensional regularisation.
The situation changes in the presence of heavy flavours as the heavy-quark mass provides a physical scale, giving rise to non-vanishing massive bubble integrals,
\begin{align}
 \ome{N,\ren}{\parton}[k] (p) \bigg\vert_{p=0} &=  \ome{N,\ren,(0)}{\parton}[k] (p)  \bigg\vert_{p=0} + a_{s,\ren} \ome{N,\ren,(1)}{\parton}[k] (p)  \bigg\vert_{p=0}  + a_{s,\ren} ^2 \ome{N,\ren,(2)}{\parton}[k] (p) \bigg\vert_{p=0}  \LineNoNumber 
 &\qquad + a_{s,\ren} ^3 \ome{N,\ren,(3)}{\parton}[k] (p) \bigg\vert_{p=0} + \order{a_{s,\ren} ^4} \period
\end{align}
This issue is resolved by evaluating the operator matrix elements in a kinematic renormalisation scheme (see Ref.~\cite{Celmaster:1979km}) in which the finite contributions from the massive bubble integrals are removed through a finite renormalisation.
In such a scheme,
\begin{align}
  \ome{N,\ren}{\parton}[k] (p) \bigg\vert_{p=0} &= \ome{N,\ren,(0)}{\parton}[k] (p)  \bigg\vert_{p=0} \delta_{\text{p,k}} \LineNoNumber
  &\equiv \const_k \delta_{\text{p,k}} \comma
\end{align}
which is constant with respect to the external momentum, though it still depends on $\epsilon$ and the Mellin moment $N$.
The Kronecker delta arises because the off–diagonal operator matrix elements have no tree–level contributions and therefore vanish when the partonic momentum is set to zero.
Depicting the inserted operator vertex as a circle labelled by $N$, which denotes the spin of the operator, the operator matrix element may be represented graphically as
\begin{subequations}
\begin{align} 
   \ome{N,\ren}{\psq}[\psq] (p) \Bigg\vert_{p=0} &= \vcenter{\hbox{\includegraphics[width=3cm]{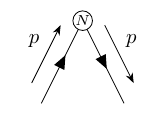}}}\Bigg\vert_{p=0} + \vcenter{\hbox{\includegraphics[width=3cm]{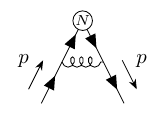}}}\Bigg\vert_{p=0} + \mathcal{O}(\alpha_s^2) \LineNoNumber
    &\equiv k^N_\psq \comma
    \LineNumber
     \ome{N,\ren}{\psq}[\psg] (p) \Bigg\vert_{p=0}  &= \vcenter{\hbox{\includegraphics[width=3cm]{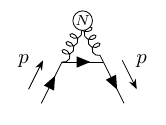}}}\Bigg\vert_{p=0} + \mathcal{O}(\alpha_s^2) \LineNoNumber
    &= 0 \comma \LineNumber
         \ome{N,\ren}{\psg}[\psg] (p) \Bigg\vert_{p=0} &= \vcenter{\hbox{\includegraphics[width=3cm]{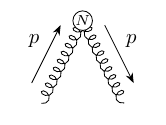}}}\Bigg\vert_{p=0} + \vcenter{\hbox{\includegraphics[width=3cm]{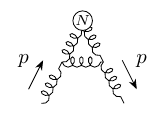}}}\Bigg\vert_{p=0} + \mathcal{O}(\alpha_s^2) \LineNoNumber
    &\equiv \text{k}^{N}_\psg \comma
    \LineNumber
     \ome{N,\ren}{\psg}[\psq] (p) \Bigg\vert_{p=0}  &= \vcenter{\hbox{\includegraphics[width=3cm]{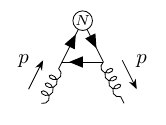}}}\Bigg\vert_{p=0} + \mathcal{O}(\alpha_s^2) \LineNoNumber
    &= 0 \comma
\end{align}
\end{subequations}
where the $k^N_\parton$ are process dependent constants, i.e., they may depend on the regulator $\epsilon$ due to the polarisation sum for the external gluons. 
Substituting the operator matrix elements, one finds the renormalised partonic invariants
\begin{align}
   T_{i,\parton}^{N,\ren} (q) &= (-1)^N \sum_{j \in \set{a,\psq,\psg}} \coef{N}{i}[j](q) Z_{j \parton} \ome{N,\ren,(0)}{\parton}[\parton](p) \bigg\vert_{p=0} \comma  \quad i \in \set{L,2} \period
\end{align}

To extract the finite coefficient functions directly, assume that the coefficient function admits an expansion in the coupling and the regulator $\epsilon$, such that
\begin{align}
\label{eq:def-coeffs}
    \coef{N}{i}[j](q) &= \sum_{m=0}^{\infty} \sum_{n=0}^{\infty} \cmom{i}{\parton}{m}{n} (q)a_{s,\ren}^m \epsilon^n 
\, .
\end{align}
Upon substituting the series expansions of the coefficient functions together with the $Z$–factors derived earlier, the normalisation is fixed at tree level,
\begin{subequations}
\begin{align}
T_{2,\psq}^{N,\ren} (q) &= 1 + \order{a_{s,\ren}} \comma \LineNumber
T_{2,\psg}^{N,\ren} (q) &= \order{a_{s,\ren}} \comma \LineNumber
T_{L,\parton}^{N,\ren} (q)&=  \order{a_{s,\ren}} , \qquad \parton \in \set{\psq, \psg}  \period
\end{align}
\end{subequations}
Note that this normalisation implicitly fixes the tree–level contributions to the moments of the coefficient functions.
Substituting the series expansions of the renormalisation constants and the coefficient functions, one finds
\begin{subequations}
\begin{align}
 \label{eq:renA}
& (-1)^N  k^{N}_\psq T_{2,\psq}^{N,\ren} = 1 + a_{s,\ren} \Big\{ \epsilon^{-1} \adim{\psq}{\psq}{0}+\cmom{2}{\psq}{1}{0}+\epsilon \cmom{2}{\psq}{1}{0} +\order{\epsilon^2}\Big\} \LineNoNumber 
  &\qquad + a_{s,\ren}^2 \Big\{\epsilon^{-2} \Big[ \tfrac{1}{2} \big(\adim{\psq}{\psg}{0} \adim{\psg}{\psq}{0} + \big(\adim{\psq}{\psq}{0}\big)^2\big) -\adim{\psq}{\psq}{0} \bfun{0} \Big] + \epsilon^{-1} \Big[ \tfrac{1}{2}\adim{\psq}{\psq}{1} + \cmom{2}{\psq}{1}{0} \adim{\psq}{\psq}{0} \phantom{\Big[}\LineNoNumber 
  &\qquad + \cmom{2}{\psg}{1}{0} \adim{\psg}{\psq}{0}\Big] + \cmom{2}{\psq}{1}{1} \adim{\psq}{\psq}{0} + \cmom{2}{\psq}{2}{0} + \cmom{2}{\psg}{1}{1} \adim{\psg}{\psq}{0} + \order{\epsilon} \Big\} + \order{a_{s,\ren}^3} \LineNumber
&  (-1)^N k^{N}_\psg T_{i,\psg}^{N,\ren} = a_{s,\ren} \Big\{ \epsilon^{-1}\adim{\psq}{\psg}{0}+\cmom{2}{\psg}{1}{0}+\epsilon\cmom{2}{\psg}{1}{0}+ +\order{\epsilon^2}\Big\} \LineNoNumber 
  &\qquad +a_{s,\ren}^2\Big\{\epsilon^{-2} \Big[ \tfrac{1}{2} \big(\adim{\psg}{\psg}{0} \adim{\psq}{\psg}{0} + \adim{\psq}{\psg}{0} \adim{\psq}{\psq}{0} - \adim{\psq}{\psg}{0} \bfun{0}\big) + \epsilon^{-1} \Big[ \tfrac{1}{2} \adim{\psq}{\psg}{1}   \phantom{\Big[}  + \cmom{2}{\psg}{1}{0} \adim{\psg}{\psg}{0} \LineNoNumber
    &\qquad + \cmom{2}{\psq}{1}{0} \adim{\psq}{\psg}{0}\Big]+ \cmom{2}{\psg}{1}{1} \adim{\psg}{\psg}{0} + \cmom{2}{\psg}{2}{0} + \cmom{2}{\psq}{1}{1} \adim{\psq}{\psg}{0} +  \order{\epsilon} \Big\} + \order{a_{s,\ren}^3} \LineNumber
& (-1)^N k^{N}_\psq T_{L,\psq}^{N,\ren} = a_{s,\ren} \Big\{ \cmom{L}{\psq}{1}{0}+\epsilon\cmom{L}{\psq}{1}{0}+\order{\epsilon^2}\Big\}  +a_{s,\ren}^2 \Big\{ \epsilon^{-1} \Big[ \cmom{L}{\psq}{1}{0} \adim{\psq}{\psq}{0} \LineNoNumber \phantom{\Big[}
    &\qquad + \cmom{L}{\psg}{1}{0} \adim{\psg}{\psq}{0} \Big]+ \cmom{L}{\psq}{1}{1} \adim{\psq}{\psq}{0} + \cmom{L}{\psq}{2}{0} + \cmom{L}{\psg}{1}{1} \adim{\psg}{\psq}{0}
  + \order{\epsilon} \Big\} + \order{a_{s,\ren}^3} \LineNumber
& (-1)^N k^{N}_\psg T_{L,\psg}^{N,\ren} = a_{s,\ren} \Big\{ \cmom{L}{\psg}{1}{0}+\epsilon\cmom{L}{\psg}{1}{0}+\order{\epsilon^2}\Big\} +a_{s,\ren}^2 \Big\{ \epsilon^{-1} \Big[ \cmom{L}{\psg}{1}{0} \adim{\psg}{\psg}{0} \LineNoNumber \phantom{\Big[}
    &\qquad + \cmom{L}{\psq}{1}{0} \adim{\psq}{\psg}{0} \Big]+ \cmom{L}{\psg}{1}{1} \adim{\psg}{\psg}{0} + \cmom{L}{\psg}{2}{0} + \cmom{L}{\psq}{1}{1} \adim{\psq}{\psg}{0} + \order{\epsilon} \Big\} + \order{a_{s,\ren}^3}
    \label{eq:renB}
\, .
\end{align}
\end{subequations}
The constants on the left-hand side are determined by the tree-level operator matrix elements, which are assumed to be normalised such that
\begin{subequations}
\begin{align}
k^N_\psq &= 1 \comma \LineNumber 
k^N_\psg &= (1-\epsilon)^{-1} \period
\end{align}
\end{subequations}
From this point, the finite part of the coefficient functions can be extracted recursively, once the left-hand side has been computed.

\subsection{Details of the Calculation}
When calculating the forward scattering amplitude, we sum over all gluon polarisations and light-quark spins and average over all colours.
For the polarisation sum, we could in principle employ a physical gauge with
\begin{align}
       \sum_{s \in \mathcal{S}_\psg} \tensor{\epsilon}{_{s,\mu}} (p)\tensor{\epsilon}{^*_{s,\nu}} (p)= -\tensor{g}{_{\mu\nu}} + \frac{\tensor{p}{_\mu}\tensor{q}{_\nu} +\tensor{p}{_\nu}\tensor{q}{_\mu}}{p \cdot q} - \frac{q^2 \tensor{p}{_\nu}\tensor{p}{_\mu}}{(p \cdot q)^2} \comma
\end{align}
since $q$ is not collinear to $p$.
Here, $\mathcal{S}_\psg$ denotes the set of gluon polarisations and $\tensor{\epsilon}{_{s,\nu}}$ the polarisation tensor with respect to the momentum $p$.
If we employ the physical polarisation sum, the harmonic projection, being essentially a derivative operator, will generate numerous additional terms through extra factors of the momentum $p$.
Instead, for computational efficiency, we prefer an unphysical gauge, employing the polarisation sum
\begin{align}
    \sum_{s \in \mathcal{S}_\psg}  \tensor{\epsilon}{_{s,\mu}} (p)\tensor{\epsilon}{^*_{s,\nu}} (p)= -\tensor{g}{_{\mu\nu}} \period
\end{align}
This approach reduces the number of generated terms, but requires ghosts as external partonic states to compensate for the unphysical gluonic degrees of freedom.
Internal gluons are taken in a general $R_\xi$ gauge, with gauge invariance ensuring that the gauge parameter $\xi$ cancels in the final result, thereby providing a consistency check.

Using these conventions, the  partonic invariants can be decomposed as
\begin{subequations}
\begin{align}
    T_{i,\psq} (p,q) &= T^{\psq \vector \psq \vector}_{i} (p,q) \comma  &&  i \in \set{L,2} \LineNumber
    T_{i,\psg} (p,q)&= T^{\psg \vector \psg \vector}_{i} (p,q)+ 2 T^{\psc \vector \psc \vector}_{i}(p,q)\comma &&  i \in \set{L,2} \period
\end{align}
\end{subequations}
The factor of two for the ghost contributions is required, since both ghosts and anti-ghosts in the initial states must be summed over.
Superscripts denote the external particle content contributing to the forward scattering amplitude.

As defined previously, the projectors are not optimal for efficient calculation, as they are redundant in the sense that the longitudinal part is evaluated twice.
A further shortcoming is their dependence on the dimensional regulator $\epsilon$, which unnecessarily complicates intermediate steps of the calculation.
To address these issues, we decompose the projectors as
\begin{subequations}
\begin{align}
    \tensor{\mathcal{P}}{_{\text{S,}}_L^\mu^\nu} &= -\frac{1}{\sp{q}{q}} \tensor{\hat{\mathcal{P}}}{_2^\mu^\nu} \comma \LineNumber
    \tensor{\mathcal{P}}{_{\text{S,}}_2^\mu^\nu} &= -\frac{1}{\sp{q}{q}} \frac{3-2\epsilon}{2-2\epsilon} \tensor{\hat{\mathcal{P}}}{_2^\mu^\nu}  - \frac{1}{2-2\epsilon}  \tensor{\hat{\mathcal{P}}}{_0^\mu^\nu} \comma \LineNumber
\end{align}
\end{subequations}
with the simplified projectors
\begin{subequations}
\begin{align}
    \tensor{\hat{\mathcal{P}}}{_0^\mu^\nu} 
    &= \tensor{g}{^\mu^\nu} \comma \LineNumber 
    \tensor{\hat{\mathcal{P}}}{_2^\mu^\nu} 
    &= 4 x^2 \tensor{p}{^\mu}\tensor{p}{^\nu} \period
\end{align}
\end{subequations}
These projectors not only avoid redundant calculations but also allow a more efficient and modular implementation.
The reason is that they disentangle the Mellin moments, allowing the forward amplitude to be expanded to a fixed power of $p$ rather than requiring term-by-term separation according to powers of the partonic momentum. 

It should be noted that the flavour factors, namely the charges of the vector bosons, decouple from the amplitudes and appear as multiplicative factors.
This allows for the simultaneous calculation of the non-singlet and singlet contributions, as the amplitudes modulo the flavour factors are identical.
The flavour factors are fully determined by the open and closed fermion lines appearing in a graph and the vector bosons coupling to them.
Therefore, to define the factors, it is sufficient to consider only a single graph representing each such configuration.
This is illustrated in \autoref{fig:flavour_graphs}, where, for completeness, all flavour factors that may appear, including those at three loops, are displayed for pure singlet configurations in the presence of a neutral current.
The concrete definitions are provided in \autoref{app:flavourfactors}, since the flavour factors are not relevant for the subsequent discussion and become important only in phenomenological applications.
\begin{figure}[H]
\centering
\fbox{%
  \begin{minipage}{0.92\textwidth}  
    \centering
    \begin{tabular}{ccc}
      \includegraphics[width=0.25\textwidth]{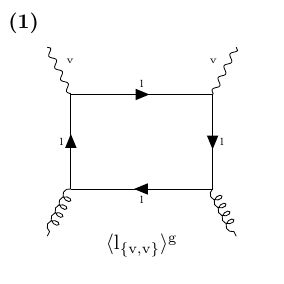} &
      \includegraphics[width=0.25\textwidth]{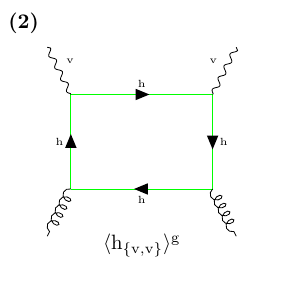} &
      \includegraphics[width=0.25\textwidth]{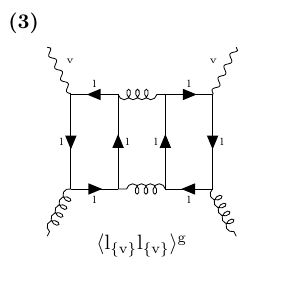} \\[0.25cm]
      \includegraphics[width=0.25\textwidth]{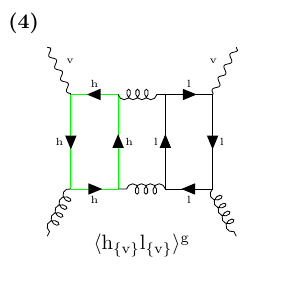} &
      \includegraphics[width=0.25\textwidth]{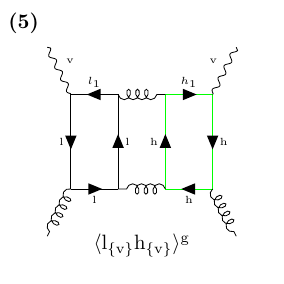} &
      \includegraphics[width=0.25\textwidth]{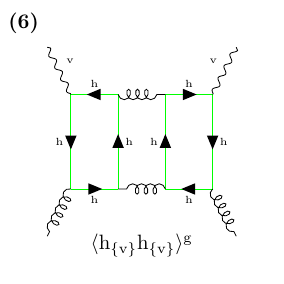} \\[0.25cm]
      \includegraphics[width=0.25\textwidth]{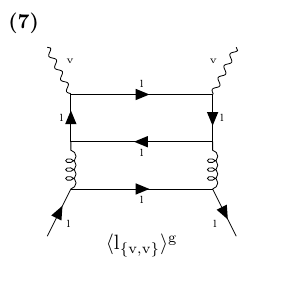} &
      \includegraphics[width=0.25\textwidth]{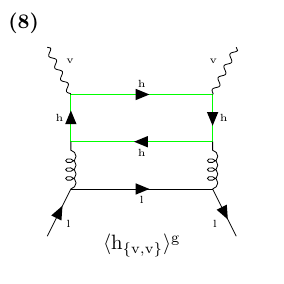} &
      \includegraphics[width=0.25\textwidth]{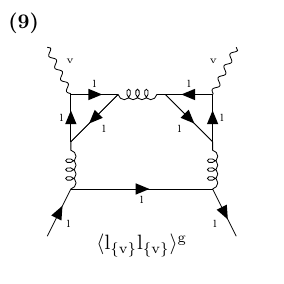} \\[0.25cm]
      \includegraphics[width=0.25\textwidth]{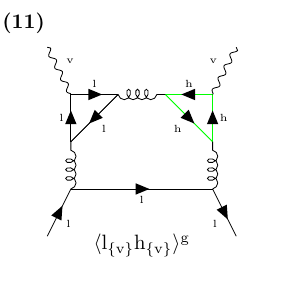} &
      \includegraphics[width=0.25\textwidth]{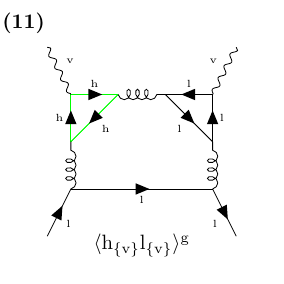} &
      \includegraphics[width=0.25\textwidth]{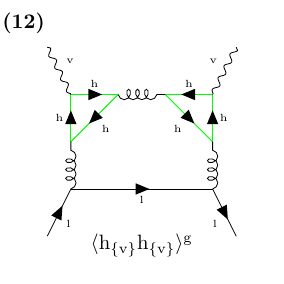} \\[0.25cm]
      \includegraphics[width=0.25\textwidth]{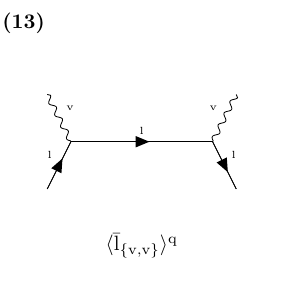} &
      \includegraphics[width=0.25\textwidth]{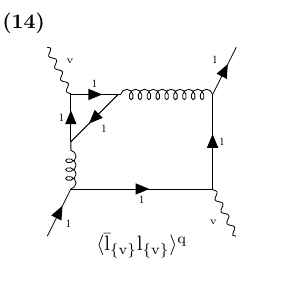} &
      \includegraphics[width=0.25\textwidth]{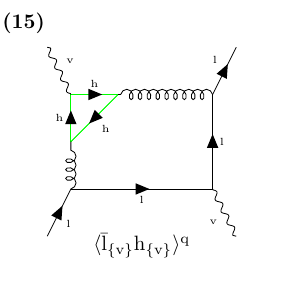}
    \end{tabular}
  \end{minipage}
}
\caption{Collection of Feynman graphs showing examples for all flavour structures that appear in neutral current forward scattering amplitudes. Light quarks $\light$ are depicted in black, heavy quarks $\heavy$ in green, and vector bosons by $\vector$.}
\label{fig:flavour_graphs}
\end{figure}

\subsection{Feynman Graphs}
The first step in computing the forward scattering amplitude is to generate all graphs contributing to the processes of interest, excluding tadpoles but including snails,
\begin{align}
\label{eq:processes}
    \parton + \vector &\longrightarrow \parton + \vector , &&\quad \parton \in \set{\psq, \psg, \psc}
\end{align}
The graph generation was performed using the generator provided with \texttt{FORM} (see Refs.~\cite{Kaneko:1994fd,Davies:2026cci}), 
and the results were checked independently with \texttt{QGRAF} (see Ref.~\cite{Nogueira:1991ex}). 
The so-generated graphs are not free of redundancies. 
To remove them, it is advantageous to work with equivalence classes of graphs rather than individual graphs, declaring two graphs equivalent if their moments yield the same amplitude up to an overall sign. 
These redundancies are eliminated in several stages by exploiting pattern matching and symmetry properties, following the procedures in \texttt{FORM} (see Refs.~\cite{Vermaseren:2000nd,Kuipers:2012rf,Ruijl:2017dtg,FORM_v4.3.1}).

As a preliminary step, all graphs that do not contribute to the amplitude must be identified, as they are all equivalent to zero.
The first such case involves graphs with vanishing colour factors.
For this purpose, it is advantageous to decompose all four gluon vertices, which separates the colour structures from the remaining numerator structures of the amplitude,
\begin{align} 
    \vcenter{\hbox{\includegraphics[width=3.7cm]{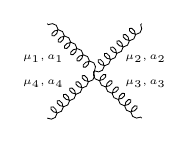}}} 
    &=  
    \vcenter{\hbox{\includegraphics[width=3.7cm]{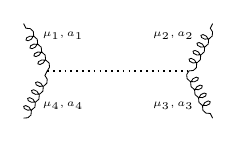}}} 
    +
    \vcenter{\hbox{\includegraphics[width=3.7cm]{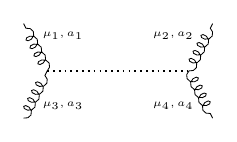}}} 
    +
    \vcenter{\hbox{\includegraphics[width=3.7cm]{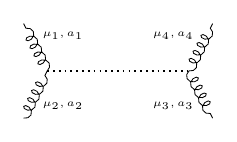}}} 
\end{align}
Once all four-gluon vertices are decomposed into three-valent ones, the colour structure effectively decouples from the Lorentz structure of the amplitudes.
At this point, the Feynman rules can be assigned to the graphs, and the \texttt{FORM} package \texttt{color.h} (see Ref.~\cite{vanRitbergen:1998pn}) can then be employed to compute all colour traces.

In the next step, other graphs contributing to the equivalence class of trivial graphs are determined.
This is done by identifying graphs that contain scaleless subgraphs after the harmonic projection is applied.
The reason is that the nullification of the partonic momentum effectively amputates the partonic edges whilst keeping the corresponding vertices.
The resulting graphs may then contain scaleless subgraphs upon identifying edges joined by a two-valent vertex that vanish in dimensional regularisation.
The criterion for identifying scaleless subgraphs is as follows.
After performing the harmonic projection, that is graphically the amputation of the partonic edges and contracting the edges joined by two-valent vertices, the resulting reduced graphs are examined.
They are analysed to determine whether they contain massless self-loops or are one-vertex reducible.
In the latter case, a graph that is not one-vertex reducible is considered scaleless if, upon the removal of a single vertex, it separates into more than one disjoint component, one of which contains neither external edges nor massive lines.
Another contribution to trivial graphs arises from cancellations due to charge conjugation, namely Furry's theorem.
These graphs may be detected directly at the level of graph generation.

The final step is to relate graphs differing only by crossing external particles.
In forward scattering amplitudes, such a crossing corresponds to reversing the momenta $p$ and $q$ and, where relevant, exchanging indices between two particles of the same type.
To examine the effect of reversing the external momenta, consider a graph with $L$ loops and only external scalar particles. Then, consider the forward scattering amplitude $\mathcal{M}_\Gamma$ with integrand $\mathcal{I}_\Gamma$, and project onto the $N$-th moment using the harmonic projection operator 
\begin{align}
\mathcal{P}_N(\mathcal{M}(p,q)) &\equiv 
\intn{k_1}{D} \dots \intn{k_L}{D} \mathcal{P}_N(\mathcal{I}(p,q)) \LineNoNumber
    &= \intn{k_1}{D} \dots \intn{k_L}{D}\sum_{M=0}^\infty   \mathcal{P}_N  \left( \tensor{p}{_{\mu_1}} \dots \tensor{p}{_{\mu_M}} \tensor{\hat{\mathcal{I}}}{^{\mu_1}^\dots^{\mu_M}}(q) \right) \LineNoNumber
    &= \intn{k_1}{D} \dots \intn{k_L}{D}  \mathcal{P}_N \left( \tensor{p}{_{\mu_1}} \dots \tensor{p}{_{\mu_N}} \tensor{\hat{\mathcal{I}_\Gamma}}{^{\mu_1}^\dots^{\mu_M}}(q) \right) \period
\end{align}
By Lorentz symmetry, $\hat{\mathcal{I}}$ can be constructed only from the vector $q$ and the metric tensor $g^{\mu\nu}$, and must be totally symmetric to be compatible with contraction with a totally symmetric tensor.
Since the metric tensor carries two indices, tensors with an even number of indices must contain an even power of the momentum $q$, whereas those with an odd number of indices must contain an odd power.
Furthermore, all coefficients may only depend on $q^{2}$, so that
\begin{align}
 \intn{k_1}{D} \dots \intn{k_L}{D}  \tensor{\hat{\mathcal{I}}}{_\Gamma^{\mu_1}^\dots^{\mu_M}}(q) &= \sum_{k=0}^{\lfloor\frac{N}{2}\rfloor} \hat{\mathcal{M}}^{N}_{\Gamma,2k} (q^2) \LineNoNumber
 &\quad \times \sum_{\sigma / \sim} \tensor{g}{^{\sigma(\mu_1)}^{\sigma(\mu_2)}} \ldots \tensor{g}{^{\sigma(\mu_{2k-1})}^{\sigma(\mu_{2k})}} \tensor{q}{^{\sigma(\mu_{2k+1})}} \ldots \tensor{q}{^{\sigma(\mu_{n})}}  \comma
\end{align}
where the sum runs over all independent permutations $\sigma$ of the indices, as described in the definition of harmonic tensors.
Hence, the $N$-th moment of the amplitude becomes
\begin{align}
\mathcal{P}_N(\mathcal{M}_\Gamma(p,q)) 
    &= \mathcal{P}_N \Bigg( \tensor{p}{_{\mu_1}} \dots \tensor{p}{_{\mu_N}} \sum_{k=0}^{\lfloor\frac{N}{2}\rfloor} \hat{\mathcal{M}}^{N}_{\Gamma,2k} (q^2) \LineNoNumber
    &\quad \times \sum_{\sigma / \sim} \tensor{g}{^{\sigma(\mu_1)}^{\sigma(\mu_2)}} \ldots \tensor{g}{^{\sigma(\mu_{2k-1})}^{\sigma(\mu_{2k})}} \tensor{q}{^{\sigma(\mu_{2k+1})}} \ldots \tensor{q}{^{\sigma(\mu_{n})}} \Bigg) \period
\end{align}
Consequently, the amplitude is an odd function of either momentum for odd $N$ and an even function for even $N$.
This property applies only to the Mellin moments, not to the amplitudes themselves, which in general are neither purely odd nor purely even functions of the momenta.

To extend this symmetry to individual graphs, one must note that the above argument applies only to graphs with external scalar edges.
Consider now the case where the external edges carrying momentum $p$ correspond to partons.
For gluons, exchanging one gluon for another is always possible since they are their own antiparticles.
The polarisation sum is manifestly symmetric in the gluon indices, requiring that momenta and their associated indices to be swapped together.
External quarks and ghosts differ, as reversing quark lines is generally not possible, with similar restrictions applying to ghost lines.
These parity transformations are therefore admissible only for gluons, leading to the graphical identities,
\begin{align}
    \mathcal{P}_N \left(
    \vcenter{\hbox{\scalebox{0.6}{
    \includegraphics{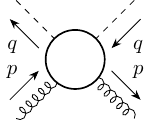}
    }}}
    \right)
    &= (-1)^N \mathcal{P}_N \left(
    \vcenter{\hbox{\scalebox{0.6}{
    \includegraphics{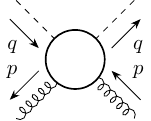}
    }}}
    \right)
    = (-1)^{N} \mathcal{P}_N \left(
    \vcenter{\hbox{\scalebox{0.6}{
    \includegraphics{images/ggHpm.pdf}
    }}}
    \right)
    = \mathcal{P}_N \left(
    \vcenter{\hbox{\scalebox{0.6}{
    \includegraphics{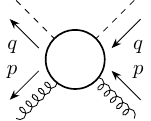}
    }}}
    \right) \period
\end{align}
If the external boson is taken to be a vector boson rather than a scalar, it is contracted with a symmetric tensor when projecting onto the partonic invariants.
Exchanging the bosonic momenta together with their indices therefore does not introduce an additional sign,
\begin{align}
    \mathcal{P}_N \left(
    \vcenter{\hbox{\scalebox{0.6}{
    \includegraphics{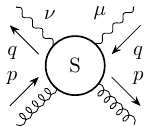}
    }}}
    \right)
    &= (-1)^N \mathcal{P}_N \left(
    \vcenter{\hbox{\scalebox{0.6}{
    \includegraphics{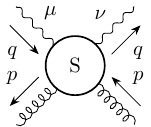}
    }}}
    \right)
    = (-1)^{N} \mathcal{P}_N \left(
    \vcenter{\hbox{\scalebox{0.6}{
    \includegraphics{images/ggSpm.pdf}
    }}}
    \right)
    = \phantom{-}\mathcal{P}_N \left(
    \vcenter{\hbox{\scalebox{0.6}{
    \includegraphics{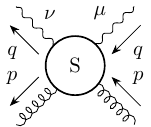}
    }}}
    \right) \comma 
\end{align}
where, for these symmetries to make sense, both vector bosons must be of the same particle type.

In the end, each diagram corresponds to an equivalence class.
To identify equivalences up to possible signs, the non-trivial graphs contributing to the calculation are matched against these equivalence classes, allowing for crossing of external particles.
Any signs arising in the process are accounted for directly.
Finally, the graphs are grouped into super-graphs (see Ref.~\cite{Ruijl:2017cxj}), defined as sums over all amplitudes with identical colour and flavour factors, up to overall numerical coefficients, that belong to the same integral family. 
The numbers of graphs up to two loops for the processes considered in Eq.~(\ref{eq:processes}) 
are listed in \autoref{tab:number-of-graphs}.

But it should be emphasised that the application of the symmetries is not crucial for the calculation of the fixed-order moments.
The reason is that dressing the graphs with Feynman rules is essentially constant and independent of the order of the moment to be calculated.
However, utilising the symmetries does nonetheless significantly improve the calculation at intermediate steps and provides a non-trivial self-consistency check, because the calculated amplitude should respect the symmetries.
\newcolumntype{Y}{>{\centering\arraybackslash}X} 
\begin{table}[ht]
\centering
\begin{tabularx}{\textwidth}{|Y|Y|Y|Y|Y|Y|Y|}
\hline
\textbf{process} & \multicolumn{2}{c|}{\textbf{unsymmetrised}} & \multicolumn{2}{c|}{\textbf{symmetrised}} & \multicolumn{2}{c|}{\textbf{super}} \\
\hline
qvqv & 12 & 234 & 3 & 40  & 1 & 7 \\
cvcv & 0  & 78  & 0 & 4   & 0 & 3 \\
gvgv & 16 & 352 & 4 & 34  & 2 & 6 \\
\hline
\end{tabularx}
\caption{
\label{tab:number-of-graphs}
Number of graphs contributing to the forward scattering amplitude at one-loop and two-loop level for each process. For each category unsymmetrised, symmetrised, and super graphs, the first column shows the one-loop count and the second column shows the two-loop count.}
\end{table}

\subsection{Integral Families}
The integral families are assigned to the amplitude through few sequential steps.
Initially, a database is constructed containing all momentum routings that may arise in the forward-scattering amplitudes.
To accomplish this, we generate all graphs contributing to a generic process at a fixed loop order $L$, excluding tadpoles and allowing for only one type of internal edge to be present,
\begin{align}
    p + q \longrightarrow p + q \period
\end{align}
The momenta $p$ and $q$ denote any parton and boson, respectively.
It is evident that all graphs contributing to the physically relevant forward scattering amplitude are contained within this set of reference graphs, as the latter are determined solely by the forward kinematics.

Each internal edge $i$ is then assigned a momentum $\tilde{q}_{i}$ with an a priori arbitrary orientation.
At each vertex, momentum conservation implies a system of equations that is not of maximal rank.
These systems of equations are solved for vanishing $p$,  the momenta,
\begin{align}
\tilde{q}_i(q,l_1,\dots,l_L) \equiv \omega_{i0} q + \sum_{j=1}^L \omega_{i,j} l_j, \quad \omega_{i,j} \in \set{-1,0,1} \comma
\end{align}
where the $l_{j}$ depict the loop momenta.
The loop momenta emerge because the systems of equations are not of maximal rank.
From these solutions, it can be inferred which edges of a graph are bridges and which correspond to dotted propagators, because bridges are edges carrying momenta independent of the loop momenta, whilst dotted propagators arise when momenta, or rather their squares, occur multiple times.

In light of the harmonic projection, the four-point function reduces kinematically to a two-point function, as nullifying the partonic momentum $p$ is kinematically equivalent to amputating the partonic edges.
It should be noted that the amputation of the partonic edges may create additional dotted propagators.
To proceed, we define reduced graphs as amputated graphs for which all bridges and edges corresponding to dotted propagators are contracted until the graph is free of dots.
Notably, the resulting reduced graphs need not be unique, since no preferred choice exists for contracting dotted edges.
This is unimportant, as it merely represents a subtopology to which the four-point graph reduces under the harmonic projection (see, e.g., the discussion of two-loop subtopologies in Ref.~\cite{Moch:1999eb}).

To restore the partonic momenta and streamline the calculation, we minimise the number of edges carrying the partonic momentum.
A depth-first search algorithm achieves this by identifying one of the shortest possible paths through the graph, acknowledging that such a path need not be unique.
Once such a path is determined, we restore the $p$-dependence by adopting the convention that the incoming parton momentum flows into the graph, whilst the outgoing parton momentum flows out. At each vertex, momentum conservation then fixes the momentum routing unambiguously, provided a shortest path exists.
Each internal edge is then ascribed a momentum,
\begin{align}
p_i = \tilde{q}_i + \eta_i p,\quad \eta_i \in \set{-1,0,1} \period
\end{align}
In summary, the first step assigns each graph a topology, namely the graph itself, a reduced graph, and a shortest $p$-path.

To continue, we repeat this procedure for a generic two-point function involving only three-valent vertices,
\begin{align}
   q \longrightarrow q \period
\end{align}
Once all superfluous edges are contracted, the graphs may contain vertices of valence greater than three.
All graphs with valence greater than three are then discarded, as they can be derived from graphs with only three-valent vertices by contracting internal edges.
For the remaining graphs, an equivalence class is generated, consisting of all graphs obtainable by contracting their internal edges.
Since the contraction of edges respects momentum conservation, each equivalence class may be ascribed a unique set of momenta,
\begin{align}
q_i(q,l_1,\dots,l_L) \equiv \omega_{i0} q + \sum_{j=1}^L \omega_{i,j} l_j, \quad \omega_{i,j} \in \set{-1,0,1} \period
\end{align}
Then, reduced graphs of the four-point graphs are matched against the equivalence classes of the two-point graphs with edge orientation taken into account.
All momenta of the four-point graphs can then be expressed in terms of those of the two-point graphs.
Hence, the outlined procedure generates a reference database of four-point topologies with their momentum routings that allows the two-point momenta to be straightforwardly assigned to the graphs contributing to the forward scattering amplitude, as the routing depends only on the topology of the graphs and not on the edge types.

Lastly, the momenta of the two-point graphs must be expressed in terms of the loop momenta $l_i$ and the bosonic momentum $q$.
For this purpose, let us define a generic momentum routing,
\begin{align}
\mathcal{R}_L = \big\{ q_1, ..., q_n \big\}, \quad n = \frac{L(L+3)}{2} \comma
\end{align}
where $n$ is the number of irreducible scalar products that may be constructed from the loop momenta $l_i$ and the external momentum $q$.
It turns out that up to two loops, it is possible to map all two-point graphs onto a single momentum routing, 
\begin{subequations}
\begin{align}
\label{eq:R1}
    \mathcal{R}_1 &= \big\{ l_1,l_1+q \big\} \comma \LineNumber
\label{eq:R2}
    \mathcal{R}_2 &= \big\{ l_1,l_1+q,l_2,l_2+q,l_1-l_2 \big\} \comma
\end{align}
\end{subequations}
see \autoref{fig:mom-routing_graphs}. 
It is advantageous to use these momentum routings as they already minimise the number of integral families needed for the discussion of massless amplitudes.
Furthermore, they make cancellations already manifest at the integrand level, as the master integrals within a family are linearly independent, which is not necessarily the case across many families.
Therefore, the momentum routings would already suffice to define integral families if all internal lines in the graphs were massless.
However, since internal lines may be massive, the momentum routings must be dressed with masses.
\begin{figure}[ht]
  \centering
  \fbox{%
    \includegraphics[width=0.7\linewidth]{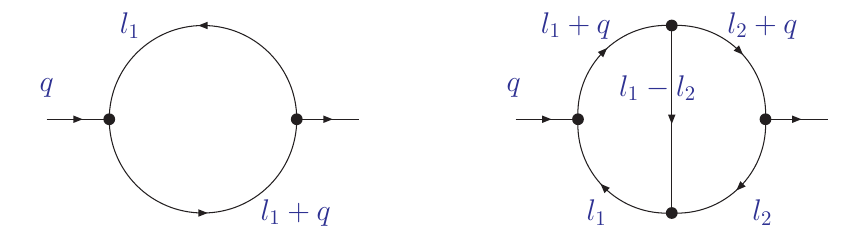}%
  }
  \caption{Independent two-point subtopologies and momentum routing at one and two loops, $\mathcal{R}_1$ and $\mathcal{R}_2$, see Eqs.~(\ref{eq:R1}) and (\ref{eq:R2}). 
  }
\label{fig:mom-routing_graphs}
\end{figure}

To incorporate the masses, we recall that edges were previously separated into massive and massless lines, with each edge $i$ assigned a mass $m_{i}$, equal either to zero or to the heavy-quark mass $m$.
With this convention, let $\mathcal{G}_{L}$ denote the set of all graphs contributing to the forward scattering amplitude with $L$ loops, and let $\mathcal{E}_{\Gamma}$ denote the set of edges of a graph $\Gamma$.
For each graph, we may define a set of pairs describing the structure of its denominators, and for each such set a set of propagators,
\begin{align}
\mathcal{F}_{\Gamma} &\equiv \bigcup_{i\in \mathcal{E}_\Gamma} \set{ (p_i\vert_{p=0},m_i)} \backslash \sim \period 
\end{align}
Here, equivalent momenta, that is, those differing by an overall sign, are identified as they appear always squared in the propagators.
Furthermore, we define an auxiliary set
\begin{align}\mathcal{R}(\mathcal{F}_\Gamma) &\equiv \bigcup_{ (p,m) \in \mathcal{F}_\Gamma} \{p\} \period
\end{align}
The set of denominator structures may further be promoted to an equivalence class by incorporating all sets of momenta that remain within the same momentum routing under momentum shift transformations.
For notational convenience, the momentum shifts are defined as transformations that are implemented through a linear operator $T$ satisfying,
\begin{subequations}
\begin{align}
    T\mathcal{F}_{\Gamma} &= \bigcup_{i\in \mathcal{E}_\Gamma} \set{ (T(p_i)\vert_{p=0},m_i)} \backslash \sim \comma \LineNumber
    \quad T(l_i) &= \omega_{i,0} q + \sum_{j=1}^L  \omega_{i,j} k_j, \quad \omega_{i,j} \in \set{-1,0,1} \period
\end{align}
\end{subequations}
The set of all momentum-shift transformations is denoted by $\mathcal{T}$.
Not all of these transformations are admissible, as only those preserving the set size and remaining within the same momentum routing are relevant.
The set of admissible transformations is given by
\begin{align}
    \mathcal{T}_A = \big\{ T \in \mathcal{T} \; \vert \; (\mathcal{R}(T\mathcal{F}_{\Gamma})  \backslash  R_L = \emptyset )\wedge (
    | T \mathcal{F}_{\Gamma} | = |\mathcal{F}_{\Gamma}| ) \big\} \period
\end{align}
Constraints on the size arise because some shifts may cause two or more momenta to coincide, making the momentum shift invalid as the corresponding Jacobian matrix would fail to be invertible.
An equivalence class of denominator structures may then be defined
\begin{align}
    \big[ \mathcal{F}_{\Gamma}  \big] = \bigcup_{T\in \mathcal{T}_A} \big\{ T \mathcal{F}_{ \Gamma} \big\} \period
\end{align}
These equivalence classes are free of duplicate entries as they are unions of sets. 
Remaining redundancies arise only from classes whose members are contained in subsets of the members of equivalence classes of larger cardinality.
This condition translates into the requirement that the power set of each equivalence class must not be a subset of the power set of any class of larger size.
We obtain the set of independent equivalence classes
\begin{align}
\mathcal{C}_L 
    = \bigcup_{\substack{
          \Gamma \in \mathcal{G}_L,\\
          P([\mathcal{F}_{\Gamma}])\not\subset P([\mathcal{F}_{\Gamma^\prime}])
          \ \forall\, \Gamma^\prime \in \mathcal{G}_L: |[\mathcal{F}_{\Gamma^\prime}]| > |[\mathcal{F}_{\Gamma}]|
        }} 
        \big\{ [ \mathcal{F}_{\Gamma}  ] \big\} \comma
\end{align}
where $P$ denotes the power set.
To find the optimal set of integral families, a given momentum routing is dressed with masses, giving rise to a basis such as
\begin{align}
    \mathcal{B}_{L,b} = \{(q_1,b_1),\dots,(q_n,b_n) \},\quad b= \sum_{i=1}^n 2^{i-1} \frac{ b_{n-i+1}}{m_i} \period
\end{align}
Constructing sets for all possible mass assignments of the momentum routings is equivalent to taking the union over all such sets for every possible representation of $b$ in binary with $n$ digits.
Therefore, the set of all integral families is obtained by taking the union,
\begin{align}
    \mathcal{B}_{L} = \bigcup_{b=0}^{2^n-1} \big\{ \mathcal{B}_{L,b} \big\} \period
\end{align}
We now introduce auxiliary functions for notational convenience.
First, a bijective index function $g$ that maps each independent equivalence class to an index in the set $\mathcal{I} \subset \mathbb{N}$ with $\abs{\mathcal{I}} =\abs{\mathcal{C}_\Gamma}$, where the index values may be chosen arbitrarily,
\begin{align}
    g_L&: \mathcal{C}_L \to \mathcal{I}, \quad g_L(C) = i, \quad C \in \mathcal{C}_L, i \in \mathcal{I} \period
\end{align}
Since the map is applied to equivalence classes, it induces a canonical map that assigns to each member of a class the index of the class itself,
\begin{align}
    \tilde{g}_L&: C\in\mathcal{C}_L \to \mathcal{I}, \quad \tilde{g}_L(c) = g_L(C), \quad c \in C \period
\end{align}
This follows because, by construction, the equivalence classes are disjoint, and hence their members must be independent.
Otherwise, the classes would not be well defined, as dependencies among their members would imply that the classes themselves are not independent.
Second, we use the index function to define an indicator function,
\begin{align}
    f_L&: \mathcal{C}_L \times \mathcal{B} \to P_{\leq1}(\mathcal{I}), \quad f_L(c, b) =
    \begin{cases}
    \{ \tilde{g}_L(c) \}, & c \setminus b = \emptyset, \\
    \emptyset,  & c \setminus b \neq \emptyset,
    \end{cases} \quad c \in \mathcal{C}_L, b \in \mathcal{B}\comma
\end{align}
where $P_{\leq 1}$ denotes the power set generating either the empty set or sets of cardinality one.
Using the indicator function, each basis family may be ascribed independent classes,
\begin{align}
\mathcal{S}_{L,b} = \bigcup_{C\in \mathcal C} \bigg( \bigcup_{c\in C} f_L(c,\mathcal{B}_{L,b}) \bigg) \period
\end{align}
From this family of sets, finding a minimal set of integral families reduces to a standard set-covering problem, with the family $\mathcal{S}_L$ and the universe $\mathcal{U}_L$ given by
\begin{subequations}
\begin{align}
    \mathcal{S}_L &= \bigcup_{b=0}^{2^n-1} \big\{ \mathcal{S}_{L,b} \big\} \comma \LineNumber
    \mathcal{U}_L &= \bigcup_{b=0}^{2^n-1} \mathcal{S}_{L,b} \period
\end{align}
\end{subequations}
Up to three loops, while here only shown for two loops, the problem remains sufficiently small to be treated with non-greedy implementations of set-covering algorithms.
This is feasible because the covering sets are sparse, with most entries being empty, which is only due to the introduction of independent equivalence classes.
Hence, the integral families found this way constitute a minimal set.
It is worth noting that the number of families can be greatly reduced by considering only families that are inequivalent under momentum shifts, rather than all families.
Solving the set-covering problem then yields the integral families,
\begin{subequations}
\begin{align}
    \hat{\mathcal{B}}_1 &= \big\{ \mathcal{B}_{1,0}, \mathcal{B}_{1,3} \big\} \comma \LineNumber
    \hat{\mathcal{B}}_2 &= \big\{ \mathcal{B}_{2,0}, \mathcal{B}_{2,30} \big\} \period
\end{align}
\end{subequations}
In a final step, when assigning the integral families to the graphs of the forward scattering amplitude, we must account for the fact that the assignment was performed using equivalence classes.
Hence, the momenta in the amplitude are shifted accordingly, determining these shifts is only a matter of book-keeping rather than a conceptual difficulty.

\subsection{Master Integrals}

The calculation of the fixed Mellin moments takes place in close analogy to that of the Mellin moments of massless structure functions within the \texttt{MINCER} framework (see Refs.~\cite{Gorishnii:1989gt,Larin:1991fz}), but is adjusted to allow for massive internal lines.
One difference, however, is that the resulting integrals are not directly reduced within \texttt{FORM}, as the symbolic reduction of Feynman integrals with massive propagators is more involved than in the massless case.
The reason is twofold.
First, the zero sector of the integral families is larger in the massless case than in the massive case, and more integration-by-parts identities are needed in the massive case.
Second, in the massless case propagators correspond to scalar products, which makes it straightforward to impose integral symmetries on the fly, whereas this is not the case in the presence of masses. 

When calculating the Mellin moments, the first step is the application of the Feynman rules and subsequently the harmonic projection to the previously defined super-graphs.
All remaining scalar products are then expressed in terms of the denominators of the assigned family.
The resulting denominators are collected in integrals of the form
\begin{align}
    B_{L,b}(a_1, \dots ,a_n) = e^{\epsilon L ( \log(4\pi) - \gamma_\text{E}) } \int \left( \prod_{j=1}^L \frac{\dd[D]{l_j}}{(2 \pi)^D} \right) \left(  \prod_{i=1}^n \frac{1}{[q_i^2 - b_i m^2]^{a_i}} \right) \period
\end{align}
By convention, all integrals are evaluated in the $\overline{\text{MS}}$-scheme.
These integrals are simplified using integration-by-parts relations, with the Laporta algorithm implemented in \texttt{KIRA} (see Ref.~\cite{Lange:2025fba}).
For the derived integral families, one finds that the following master integrals form a complete basis,
\begin{subequations}
\begin{align}
    \big\langle \mathcal{B}_{1,0} \big\rangle &\equiv \big\langle B_{1,0}(1,1) \big\rangle_{\mathbb{Q}(D,\sp{q}{q})}  \comma \LineNumber
    \big\langle \mathcal{B}_{1,3} \big\rangle  &\equiv \big\langle B_{1,3}(1,0), B_{1,3}(1,1) \big\rangle_{\mathbb{Q}(D,\sp{q}{q},m^2)} \comma
    \LineNumber
    \big\langle \mathcal{B}_{2,0} \big\rangle  &\equiv \big\langle B_{2,0}(1,1,1,1,0),B_{2,0}(0,1,1,0,1) \big\rangle_{\mathbb{Q}(D,\sp{q}{q})} \comma \LineNumber
    \big\langle \mathcal{B}_{2,30} \big\rangle  &\equiv \big\langle B_{2,30}(1,0,1,0,0),B_{2,30}(1,1,1,0,0),B_{2,30}(1,1,1,1,0),\LineNoNumber 
    &\qquad B_{2,30}(0,1,1,0,1),B_{2,30}(0,1,1,0,2) \big\rangle_{\mathbb{Q}(D,\sp{q}{q},m^2)} \period
\end{align}
\end{subequations}
Here, $\langle \dots \rangle_{\mathbb{Q}(x_1, \dots, x_n)}$ denotes the linear span over the field of rational functions in the variables $x_1, \dots, x_n$ with rational coefficients.

Next, we briefly discuss the master integrals required for this calculation. These integrals are well known in the literature and have been analysed extensively, for example in Ref.~\cite{Bauberger:1994hx,Martin:2005qm}, and more recently in Ref.~\cite{Forner:2024ojj}. The master integrals appearing in our setup are not all independent, as several two‑loop topologies can be reduced to products of one‑loop integrals. For completeness, and to make the presentation self‑contained, we outline their solution using the method of differential equations.
\begin{align}
B_{2,0}(1,1,1,1,0) &= B_{1,3}(1,1)^2 \comma \LineNumber
B_{2,30}(1,0,1,0,0) &= B_{1,3}(1,0)^2  \comma \LineNumber 
B_{2,30}(1,1,1,0,0) &= B_{1,3}(1,0) B_{1,3}(1,1) \comma  \LineNumber 
B_{2,30}(1,1,1,1,0) &= B_{1,3}(1,1)^2 \period
\end{align}
To evaluate the master integrals, the massless master integrals are straightforward, as they reduce to Euler beta functions and convolutions thereof. 
In particular, one finds
\begin{subequations}
\begin{align}
B_{1,0}(1,1) &= e^{\epsilon ( \log(4\pi) - \gamma_\text{E} )} (-\sp{q}{q})^{-\epsilon} \frac{\gammaf{1-\epsilon}^2 \gammaf{\epsilon} }{\gammaf{2(1-\epsilon)}} \comma \LineNumber 
B_{2,0}(0,1,1,0,1) &= e^{2 \epsilon ( \log(4\pi) - \gamma_\text{E} )} (-\sp{q}{q})^{1-3\epsilon} \frac{\gammaf{1-\epsilon}^3 \gammaf{2\epsilon-1} }{\gammaf{3(1-\epsilon)}}  \period
\end{align}
\end{subequations}
For the remaining massive integral families, the one-loop integrals need not be calculated explicitly, as they can be obtained from the two-loop results by taking square roots.
This determines the one-loop master integrals up to an overall sign, an ambiguity that will be fixed later.
To calculate these integrals, it is sufficient to set up a single system of differential equations by defining the vector,
\begin{align}
    \vec{B}  =
    \begin{bmatrix}
    B_{2,30}(1,0,1,0,0) \\ 
    B_{2,30}(1,1,1,0,0)  \\ 
    B_{2,30}(1,1,1,1,0) \\ 
    B_{2,30}(0,1,1,0,2)   \\ 
    B_{2,30}(0,1,1,0,1)
    \end{bmatrix} \period
\end{align}
All other master integrals can be deduced from the master integrals calculated in this way.
To simplify the calculation, scaling properties of the Feynman integrals are used, allowing one to work in a chart with
\begin{align}
  \sp{Q}{Q} = - \sp{q}{q} = 1 \period
\end{align}
While it is trivial to restore the $\sp{Q}{Q}$ dependence, working in this chart has the advantage that only a univariate differential equation needs to be considered rather than a multivariate one.
To simplify the notation, the following change of variables is employed,
\begin{align}
  \kappa \equiv \frac{m^2}{\sp{Q}{Q}} \period
\end{align}
In this chart, and using the new variable, the differential equation is obtained by using integration-by-parts identities
\begin{align}
    \partial_\kappa  \vec{B} \dd{\kappa} = A_{\kappa } \dd{\kappa} \comma
\end{align}
where the connection one-form is obtained by taking derivatives of the master integral vector with respect to $\kappa$ and applying integration-by-parts identities.
The resulting one-form, after partial fractioning, contains denominators quadratic in $\kappa$.
To bring the differential equations into canonical form, the square roots are rationalised.
This is accomplished by the change of variables,
\begin{subequations}
\begin{align}
	\kappa &= \frac{\lambda}{(1-\lambda)^2} \comma \LineNumber
       \dd{\kappa} &= - \frac{1+\lambda}{(1-\lambda)^3}  \dd{\lambda} \period
\end{align}
\end{subequations}
Upon changing variables, the differential equation can be brought into canonical form using \texttt{CANONICA} (see Refs.~\cite{Meyer:2017joq,Meyer:2018feh}).
Upon performing a partial fraction decomposition, all master integrals can be expressed in terms of iterated integrals, which give rise to multiple polylogarithms over the alphabet,
\begin{align}
    \mathcal{A} = \set{\lambda,\lambda-1,\lambda+1} \period
\end{align}
Therefore, a subclass of multiple polylogarithms, namely the harmonic polylogarithms (see Ref.~\cite{Remiddi:1999ew}), suffices.
These are defined by
\begin{subequations}
\begin{align}
\label{eq:hpl_definition}
    \hpl{a_0,a_1,\dots,a_n} &\equiv (-1)^{\delta_{a,1}}\integral{\lambda^\prime}{0}{\lambda} \frac{\hpl{a_1,\dots,a_n}}{\lambda^\prime-a}\comma \; (a_i \in \set{-1,0,1}) \wedge \neg\!\left( \forall i, a_i = 0 \right)\comma \LineNumber
     \hpl{a_0,a_1,\dots,a_n} &\equiv \frac{1}{n!} \log^n (\lambda), \quad a_i \in \set{0} \period
\end{align}
\end{subequations}
To solve the integral, it is assumed that the Feynman integrals admit a series expansion,
\begin{align}
    \vec{B} = \sum_{k\in \mathbb{Z}} \vec{B}_k \epsilon^k \period
\end{align}
Given boundary values, the differential equation may be solved order by order as a series in the regulator $\epsilon$.
All boundary conditions can be obtained from the massless integrals and from massive integrals evaluated at specific values of $\lambda$, or equivalently $\kappa$.
For this purpose, the integrals were calculated using the Mellin–Barnes method and evaluated in terms of hypergeometric functions.
The boundary conditions for all but the last integral are then obtained by evaluating the exact result for specific values of $\kappa$.
This simultaneously fixes the signs of the one-loop integrals, eliminating the ambiguity that arose from extracting square roots of the two-loop integrals.

All results were numerically verified at various kinematic points using \texttt{AMFlow} (see Ref.~\cite{Liu:2022chg}) and compared with the exact results calculated for the boundary conditions.
For the calculation of the fixed Mellin moments at two loops, the integrals need only be expanded to $\order{\epsilon^3}$, as higher orders are not required for the two-loop coefficient functions.

\subsection{Mellin moments at large values of $N$}
\label{sec:Mellin-large}

In an independent approach we generate the amplitude for the forward scattering, without expansing in the momentum $p$. 
In this approach we generate the contributing graphs with \texttt{QGRAF}~\cite{Nogueira:1991ex} and then use the packages \texttt{tapir}~\cite{Gerlach:2022qnc} and \texttt{exp}~\cite{Harlander:1998cmq,Seidensticker:1999bb} to translate them into \texttt{FORM}~\cite{Vermaseren:2000nd,Kuipers:2012rf,Ruijl:2017dtg,FORM_v4.3.1} input and map them to integral families. 
Subsequently, we use \texttt{FORM} and the \texttt{color.h}~\cite{vanRitbergen:1998pn} package to perform Dirac- and colour-algebra and express the amplitude in terms of scalar integrals. 
Due to the forward kinematics of our amplitude we need to apply partial fraction decomposition and we end up with a large number of redundant integral families. 
We use the package \texttt{feynson}~\cite{Maheria:2022dsq} to reduce them to a minimal set. 
It turns out that we need 2 (8) integrals families at one (two) loop order. 
Using symmetry properties under $q \leftrightarrow -q$, we can reduce the number of families to 1 (4).
In the limit $1/z = \omega \to 0$ the families can be mapped to the integral families which have been introduced for the calculation of fixed moments before. 

Subsequently, we use \texttt{KIRA}~\cite{Lange:2025fba} to reduce the amplitude to master integrals. 
Since our reductions depend on the two dimensionless scales $\kappa = m^2/Q^2$ and $\omega$ as well as the dimension $D=4-2\epsilon$, it turns out that it is beneficial to first search for a basis of master integrals where the denominators appearing in the reduction tables already factorise in the kinematic variables $\kappa$, $\omega$ and the dimension $D$.
We use the package \texttt{FindBetterBasis}~\cite{Smirnov:2020quc} which is shipped together with \texttt{FIRE6}~\cite{Smirnov:2023yhb} to find these candidates. 
In the end, we find a total of 72 master integrals. 
The reduction to master integrals also allows us to derive differential equations in $\omega$ and $\kappa$. 

The differential equation in $\omega$ can now be exploited to calculate high moments of the structure functions.
In practise, we insert an ansatz for the expansion around $\omega \to 0$, a simple Taylor series, into the differential equation.
This turns the differential equation into a linear system of equations for the expansion coefficients. 
We solve this system with \texttt{KIRA} together with \texttt{FireFly}~\cite{Klappert:2019emp,Klappert:2020aqs}. 
The boundary conditions of the differential equations are given by the master integrals evaluated at $\omega=0$.
These can be reduced to the integrals discussed and solved in the previous section and we take the results directly from there.

This approach is effective for the number of moments we want to achieve in the massive case.
We note that there are more efficient algorithms based on recurrences, which allow for the calculation of much higher moments, see, e.g. Refs.~\cite{Blumlein:2017dxp,Basdew-Sharma:2022vya}. 
However, these methods are particularly useful for one-scale problems, when analytical results can be reconstructed from a large number of moments. 
In our case, the rational functions in $\kappa$ grow for increasing moments, making a reconstruction of the analytic result from moments rather unfeasible.

We calculate the master integrals up to and including $\mathcal{O}(\omega^{25})$.
Due to inverse powers of $\omega$ in the amplitude this allows us to calculate up to the 23rd Mellin moment of the structure functions.
In the first approach based on expansions and harmonic projectors,
we are faced with a factorial growth of the amplitude and complexity of the scalar integrals to be reduced.
Going to $N=22$ requires, for example, the reduction of integrals with tensor rank and dots up to 24. 
To perform such reductions efficiently one usually has to resort to custom codes such like \texttt{MINCER}~\cite{Gorishnii:1989gt,Larin:1991fz}, \texttt{MATAD}~\cite{Steinhauser:2000ry} or \texttt{FORCER}~\cite{Ruijl:2017cxj}, which are, however, not available for the topologies we are dealing with.

\subsection{Going beyond fixed moments}
To obtain more complete results for the structure functions beyond fixed moments, two directions are possible. On the one hand, one can perform a complete analytic calculation of the structure functions. On the other hand, one can try and obtain systematic expansions in different kinematic limits. 

The analytic solutions of the NLO coefficient functions involving massive quarks are rather involved 
and have only been achieved in the quark initiated coefficient functions, see for example Ref.~\cite{Blumlein:2016xcy} for the non-singlet and Refs.~\cite{Blumlein:2019qze,Blumlein:2019zux} for the pure-singlet case. 
In the latter case, iterated integrals over involved square-root valued letters are needed, which are related to incomplete elliptic integrals and are not ideally suited for fast numerical evaluations.
A complete analytic solution of the structure functions therefore also has to address this issue.

One option to improve the precision, while maintaining a simpler analytic form, is the expansion for $\kappa \ll 1$.
The leading term in this expansion is known from the calculation of massive operator matrix elements.
We have reproduced these results by solving the differential equations of the master integrals in this asymptotic limit.
Calculating additional power-suppressed expansion terms in the asymptotic limit is possible, however not easily applicable to NNLO.
We leave studies beyond fixed Mellin moments for future work.

\section{Results}
\label{sec:results}

\subsection{Analytical results for fixed Mellin moments}
The complete results for the Mellin moments of the transverse and longitudinal heavy quark coefficient functions, $C^N_{2,i}$ and 
 $C^N_{L,i}$ up to NLO, retaining their full mass dependence, are rather lengthy.
The set‑up discussed in the previous section also allows us to compute the Mellin moments of the massless DIS coefficient functions up to two loops, which are included in our results. The two‑loop massless contributions to $F_2$ and $F_L$ were first computed in Refs.~\cite{Zijlstra:1992qd, Moch:1999eb}, and our results agree, of course.
The full set of Mellin moments for $N=2,\ldots,22$ is provided in an ancillary file. 
The file can also be obtained from the authors upon request.

For illustration, we only explicitly present the results for the first moments.
Here and below, we fix the scales for the coefficient functions to $\mu_r = \mu_f = Q$.
To enhance readability, the expressions are decomposed, and the
polynomials $\num{i}{j}{k}$ appearing in the numerators are given separately.
In the following, $\hpl{\vec{a}}$ refers to the harmonic polylogarithms defined in~\eqref{eq:hpl_definition}, and $\zeta_n $ denotes the Riemann zeta function for an integer $n>1$. For $N =2 $ the moments are
\begin{align}
C^2_{2,\psg} &= \phantom{+} a_{s,\ren} \bigg[T_F {\langle\heavy_{\set{\vector,\vector}} \rangle^\psg} n_{\heavy} \bigg(\frac{4 \hpl{0} \num{\psg}{2}{02}}{3 (\lambda -1) (\lambda +1)^3}+\frac{\num{\psg}{2}{01}}{(\lambda +1)^2}\bigg)-T_F {\langle\light_{\set{\vector,\vector}} \rangle^\psg} n_{\light}\bigg] 
  \LineNoNumber &\quad + a_{s,\ren}^2 \bigg[C_A T_F {\langle\heavy_{\set{\vector,\vector}} \rangle^\psg} n_{\heavy} \bigg(\frac{4 \hpl{-1,0} \num{\psg}{2}{13}}{45 (\lambda -1) \lambda ^3 (\lambda +1)^3}-\frac{2 \hpl{1,0} \num{\psg}{2}{14}}{45 (\lambda -1) \lambda ^3 (\lambda +1)^3}
  \LineNoNumber &\qquad +\frac{\hpl{0} \num{\psg}{2}{11}}{135 (\lambda -1) \lambda ^2 (\lambda +1)^3}-\frac{88 \hpl{0,1} \num{\psg}{2}{02}}{9 (\lambda -1) (\lambda +1)^3}+\frac{22 \hpl{1} \num{\psg}{2}{08}}{3 (\lambda +1)^2}+\frac{16 \hpl{-1,0,0} \num{\psg}{2}{09}}{15 (\lambda -1)^2}
  \LineNoNumber &\qquad -\frac{16 \hpl{0,-1,0} \num{\psg}{2}{09}}{15 (\lambda -1)^2}+\frac{8 \hpl{0,1,0} \num{\psg}{2}{09}}{15 (\lambda -1)^2}-\frac{8 \hpl{1,0,0} \num{\psg}{2}{09}}{15 (\lambda -1)^2}-\frac{2 \hpl{0,0} \num{\psg}{2}{12}}{45 (\lambda -1)^2 (\lambda +1)^4}
  \LineNoNumber &\qquad +\frac{\num{\psg}{2}{10}}{90 \lambda ^2 (\lambda +1)^2}-\frac{4 \mzv{3} \num{\psg}{2}{09}}{5 (\lambda -1)^2}\bigg)+C_F T_F {\langle\heavy_{\set{\vector,\vector}} \rangle^\psg} n_{\heavy} \bigg(\frac{4 \hpl{-1,0} \num{\psg}{2}{06}}{45 (\lambda -1) \lambda ^3 (\lambda +1)}
  \LineNoNumber &\qquad -\frac{2 \hpl{1,0} \num{\psg}{2}{06}}{45 (\lambda -1) \lambda ^3 (\lambda +1)}+\frac{\hpl{0} \num{\psg}{2}{05}}{45 (\lambda -1) \lambda ^2 (\lambda +1)^3}-\frac{64 \hpl{-1,0,0} \num{\psg}{2}{03}}{15 (\lambda -1)^2}+\frac{64 \hpl{0,-1,0} \num{\psg}{2}{03}}{15 (\lambda -1)^2}
  \LineNoNumber &\qquad -\frac{32 \hpl{0,1,0} \num{\psg}{2}{03}}{15 (\lambda -1)^2}+\frac{32 \hpl{1,0,0} \num{\psg}{2}{03}}{15 (\lambda -1)^2}-\frac{2 \hpl{0,0} \num{\psg}{2}{07}}{45 (\lambda -1)^2 (\lambda +1)^4}+\frac{\num{\psg}{2}{04}}{90 \lambda ^2 (\lambda +1)^2}
  \LineNoNumber &\qquad  +\frac{16 \mzv{3} \num{\psg}{2}{03}}{5 (\lambda -1)^2}\bigg)+\bigg(\frac{115}{162}-4 \mzv{3}\bigg) C_A T_F {\langle\light_{\set{\vector,\vector}} \rangle^\psg} n_{\light}+\bigg(\frac{32 \mzv{3}}{5}-\frac{4799}{405}\bigg) C_F T_F {\langle\light_{\set{\vector,\vector}} \rangle^\psg} n_{\light}\bigg]
\end{align}
\begin{subequations}
\begin{align}
\num{\psg}{2}{01} &= -\lambda ^2+4 \lambda -1\phantom{\big(} \LineNumber
\num{\psg}{2}{02} &= \lambda ^4-3 \lambda ^3+\lambda ^2-3 \lambda +1\phantom{\big(} \LineNumber
\num{\psg}{2}{03} &= 2 \lambda ^2-3 \lambda +2\phantom{\big(} \LineNumber
\num{\psg}{2}{04} &= -124 \lambda ^6-683 \lambda ^5-252 \lambda ^4+2318 \lambda ^3-252 \lambda ^2-683 \lambda -124\phantom{\big(} \LineNumber
\num{\psg}{2}{05} &= -62 \lambda ^8-295 \lambda ^7+262 \lambda ^6-2373 \lambda ^5+4152 \lambda ^4-2373 \lambda ^3+262 \lambda ^2-295 \lambda \LineNoNumber
  &\quad -62\phantom{\big(} \LineNumber
\num{\psg}{2}{06} &= 31 \lambda ^8+132 \lambda ^7-12 \lambda ^6-456 \lambda ^5+514 \lambda ^4-456 \lambda ^3-12 \lambda ^2+132 \lambda +31\phantom{\big(} \LineNumber
\num{\psg}{2}{07} &= 93 \lambda ^9+582 \lambda ^8+756 \lambda ^7-1546 \lambda ^6-3980 \lambda ^5+4650 \lambda ^4-4312 \lambda ^3+3258 \lambda ^2 \LineNoNumber
  &\quad -1901 \lambda +80\phantom{\big(} \LineNumber
\num{\psg}{2}{08} &= \lambda ^2-4 \lambda +1\phantom{\big(} \LineNumber
\num{\psg}{2}{09} &= 5 \lambda ^2-8 \lambda +5\phantom{\big(} \LineNumber
\num{\psg}{2}{10} &= -116 \lambda ^6-585 \lambda ^5-275 \lambda ^4-338 \lambda ^3-275 \lambda ^2-585 \lambda -116\phantom{\big(} \LineNumber
\num{\psg}{2}{11} &= -174 \lambda ^8-747 \lambda ^7+1828 \lambda ^6-3045 \lambda ^5+4192 \lambda ^4+915 \lambda ^3+838 \lambda ^2-747 \lambda  \LineNoNumber
  &\quad -174\phantom{\big(} \LineNumber
\num{\psg}{2}{12} &= 87 \lambda ^9+504 \lambda ^8+396 \lambda ^7-516 \lambda ^6-474 \lambda ^5-564 \lambda ^4+892 \lambda ^3+30 \lambda ^2+315 \lambda  \LineNoNumber
  &\quad -110\phantom{\big(} \LineNumber
\num{\psg}{2}{13} &= 29 \lambda ^{10}+168 \lambda ^9+161 \lambda ^8-114 \lambda ^7+338 \lambda ^6-312 \lambda ^5+338 \lambda ^4-114 \lambda ^3 \LineNoNumber
  &\quad +161 \lambda ^2+168 \lambda +29\phantom{\big(} \LineNumber
\num{\psg}{2}{14} &= 29 \lambda ^{10}+168 \lambda ^9+161 \lambda ^8+106 \lambda ^7-322 \lambda ^6-92 \lambda ^5-322 \lambda ^4+106 \lambda ^3+161 \lambda ^2 \LineNoNumber
  &\quad +168 \lambda +29\phantom{\big(}
\end{align}
\end{subequations}
\begin{align}
C^2_{2,\psq} &= 1+\frac{a_{s,\ren} 
\langle\light_{\set{\vector,\vector}} \rangle^\psq C_F}{3}
  + a_{s,\ren}^2 \bigg[C_F T_F {\langle\heavy_{\set{\vector,\vector}} \rangle^\psg} n_{\heavy} \bigg(-\frac{8 \hpl{1,0} \num{\psq}{2}{06}}{45 \lambda ^3}+\frac{16 \hpl{-1,0} \num{\psq}{2}{11}}{45 (\lambda -1) \lambda ^3 (\lambda +1)^3} \LineNoNumber 
  &\quad -\frac{4 \hpl{0} \num{\psq}{2}{10}}{135 (\lambda -1) \lambda ^2 (\lambda +1)^3}-\frac{16 \hpl{1} \num{\psq}{2}{05}}{3 (\lambda +1)^2}+\frac{64 \hpl{0,1} \num{\psq}{2}{07}}{9 (\lambda -1) (\lambda +1)^3}-\frac{8 \hpl{0,0} \num{\psq}{2}{09}}{45 (\lambda -1)^2 (\lambda +1)^3}\LineNoNumber 
  &\quad -\frac{2 \num{\psq}{2}{08}}{45 \lambda ^2 (\lambda +1)^2}\bigg)+\langle\heavy_{\set{\vector,\vector}} \rangle^\psg C_F T_F n_{\heavy} \bigg(\frac{2 \hpl{1} \num{\psq}{2}{01}}{14175 \lambda ^3}-\frac{8 \hpl{0,0} \num{\psq}{2}{02}}{45 (\lambda -1)^4}+\frac{2 \hpl{0} \num{\psq}{2}{04}}{14175 (\lambda -1)^3}\LineNoNumber 
  &\quad +\frac{\num{\psq}{2}{03}}{42525 (\lambda -1)^2 \lambda ^2}\bigg)+\langle\light_{\set{\vector,\vector}} \rangle^\psq \bigg( \bigg(\frac{3677}{135}-\frac{128 \mzv{3}}{5}\bigg) C_A C_F-8 C_F T_F n_{\light}\LineNoNumber 
  &\quad +\bigg(\frac{96 \mzv{3}}{5}-\frac{4189}{810}\bigg) C_F{}^2\bigg) \bigg) -\frac{266}{81} C_F T_F {\langle\light_{\set{\vector,\vector}} \rangle^\psg} n_{\light}
\end{align}
\begin{subequations}
\begin{align}
\num{\psq}{2}{01} &= (\lambda -1)^2 \left(8 \lambda ^4+103 \lambda ^3-5262 \lambda ^2+103 \lambda +8\right)\phantom{\big(} \LineNumber
\num{\psq}{2}{02} &= 20 \lambda ^4-80 \lambda ^3+141 \lambda ^2-80 \lambda +20\phantom{\big(} \LineNumber
\num{\psq}{2}{03} &= -48 \lambda ^6-450 \lambda ^5-508273 \lambda ^4+1114802 \lambda ^3-508273 \lambda ^2-450 \lambda -48\phantom{\big(} \LineNumber
\num{\psq}{2}{04} &= 8 \lambda ^6+63 \lambda ^5-5697 \lambda ^4+64113 \lambda ^3-76167 \lambda ^2-22050 \lambda +36750\phantom{\big(} \LineNumber
\num{\psq}{2}{05} &= \lambda ^2-4 \lambda +1\phantom{\big(} \LineNumber
\num{\psq}{2}{06} &= (\lambda -1) (\lambda +1)^3 \left(7 \lambda ^2-4 \lambda +7\right)\phantom{\big(} \LineNumber
\num{\psq}{2}{07} &= \lambda ^4-3 \lambda ^3+\lambda ^2-3 \lambda +1\phantom{\big(} \LineNumber
\num{\psq}{2}{08} &= 28 \lambda ^6+75 \lambda ^5+199 \lambda ^4-1016 \lambda ^3+199 \lambda ^2+75 \lambda +28\phantom{\big(} \LineNumber
\num{\psq}{2}{09} &= 21 \lambda ^8+51 \lambda ^7-15 \lambda ^6-45 \lambda ^5-270 \lambda ^4+258 \lambda ^3+68 \lambda ^2-80 \lambda +20\phantom{\big(} \LineNumber
\num{\psq}{2}{10} &= 42 \lambda ^8+81 \lambda ^7-8 \lambda ^6-408 \lambda ^5+538 \lambda ^4+312 \lambda ^3-188 \lambda ^2+81 \lambda +42\phantom{\big(} \LineNumber
\num{\psq}{2}{11} &= 7 \lambda ^{10}+24 \lambda ^9+19 \lambda ^8+24 \lambda ^7-146 \lambda ^6+24 \lambda ^5-146 \lambda ^4+24 \lambda ^3+19 \lambda ^2+24 \lambda \LineNoNumber
&\quad +7\phantom{\big(}
\end{align}
\end{subequations}
\begin{align}
  C^2_{L,\psg} &= a_{s,\ren} \bigg[ T_F {\langle\heavy_{\set{\vector,\vector}} \rangle^\psg} n_{\heavy} \bigg(\frac{4 \num{\psg}{L}{01}}{3 (\lambda +1)^4}-\frac{8 \hpl{0} \num{\psg}{L}{02}}{(\lambda -1) (\lambda +1)^5}\bigg)+\frac{4}{3} T_F {\langle\light_{\set{\vector,\vector}} \rangle^\psg} n_{\light}\bigg] \LineNoNumber
  &\quad + a_{s,\ren}^2 \bigg[ C_A T_F {\langle\heavy_{\set{\vector,\vector}} \rangle^\psg} n_{\heavy} \bigg(\frac{4 \hpl{1,0} \num{\psg}{L}{04}}{15 (\lambda -1) \lambda ^3 (\lambda +1)}-\frac{8 \hpl{-1,0} \num{\psg}{L}{08}}{15 (\lambda -1) \lambda ^3 (\lambda +1)^5} \LineNoNumber
  &\quad +\frac{2 \hpl{0} \num{\psg}{L}{07}}{45 (\lambda -1) \lambda ^2 (\lambda +1)^5}-\frac{88 \hpl{1} \num{\psg}{L}{01}}{9 (\lambda +1)^4}+\frac{176 \hpl{0,1} \num{\psg}{L}{02}}{3 (\lambda -1) (\lambda +1)^5}-\frac{64 \hpl{-1,0,0} \num{\psg}{L}{03}}{5 (\lambda -1)^2} \LineNoNumber
  &\quad +\frac{64 \hpl{0,-1,0} \num{\psg}{L}{03}}{5 (\lambda -1)^2}-\frac{32 \hpl{0,1,0} \num{\psg}{L}{03}}{5 (\lambda -1)^2}+\frac{32 \hpl{1,0,0} \num{\psg}{L}{03}}{5 (\lambda -1)^2}+\frac{4 \hpl{0,0} \num{\psg}{L}{06}}{15 (\lambda -1)^2 (\lambda +1)^6} \LineNoNumber
  &\quad +\frac{\num{\psg}{L}{05}}{135 \lambda ^2 (\lambda +1)^4}+\frac{48 \mzv{3} \num{\psg}{L}{03}}{5 (\lambda -1)^2}\bigg)+C_F T_F {\langle\heavy_{\set{\vector,\vector}} \rangle^\psg} n_{\heavy} \bigg(-\frac{8 \hpl{-1,0} \num{\psg}{L}{11}}{15 (\lambda -1) \lambda ^3 (\lambda +1)^3}  \LineNoNumber
  &\quad +\frac{4 \hpl{1,0} \num{\psg}{L}{11}}{15 (\lambda -1) \lambda ^3 (\lambda +1)^3}+\frac{2 \hpl{0} \num{\psg}{L}{12}}{15 (\lambda -1) \lambda ^2 (\lambda +1)^5}+\frac{128 \hpl{-1,0,0} \num{\psg}{L}{09}}{15 (\lambda -1)^2}-\frac{128 \hpl{0,-1,0} \num{\psg}{L}{09}}{15 (\lambda -1)^2} \LineNoNumber
  &\quad+\frac{64 \hpl{0,1,0} \num{\psg}{L}{09}}{15 (\lambda -1)^2}-\frac{64 \hpl{1,0,0} \num{\psg}{L}{09}}{15 (\lambda -1)^2}+\frac{4 \hpl{0,0} \num{\psg}{L}{13}}{15 (\lambda -1)^2 (\lambda +1)^6}+\frac{\num{\psg}{L}{10}}{15 \lambda ^2 (\lambda +1)^4}-\frac{32 \mzv{3} \num{\psg}{L}{09}}{5 (\lambda -1)^2}\bigg) \LineNoNumber
  &\quad +\frac{346}{27} C_A T_F {\langle\light_{\set{\vector,\vector}} \rangle^\psg} n_{\light}+\bigg(-\frac{32 \mzv{3}}{5}-\frac{232}{135}\bigg) C_F T_F {\langle\light_{\set{\vector,\vector}} \rangle^\psg} n_{\light}\bigg]
\end{align}
\begin{subequations}
\begin{align}
\num{\psg}{L}{01} &= \big(\lambda ^2+\lambda +1\big) \big(\lambda ^2+8 \lambda +1\big)\phantom{\big(} \LineNumber
\num{\psg}{L}{02} &= \lambda  \left(\lambda ^4+2 \lambda ^3+4 \lambda ^2+2 \lambda +1\right)\phantom{\big(} \LineNumber
\num{\psg}{L}{03} &= \lambda\phantom{\big(} \LineNumber
\num{\psg}{L}{04} &= 29 \lambda ^8-150 \lambda ^7+152 \lambda ^6-58 \lambda ^5+102 \lambda ^4-58 \lambda ^3+152 \lambda ^2-150 \lambda +29\phantom{\big(} \LineNumber
\num{\psg}{L}{05} &= 1044 \lambda ^8-2007 \lambda ^7-6056 \lambda ^6\LineNoNumber
  &\quad +2037 \lambda ^5+12304 \lambda ^4+2037 \lambda ^3-6056 \lambda ^2-2007 \lambda +1044\phantom{\big(} \LineNumber
\num{\psg}{L}{06} &= \lambda  \big(87 \lambda ^{10}-102 \lambda ^9-909 \lambda ^8-600 \lambda ^7+1170 \lambda ^6+1532 \lambda ^5+130 \lambda ^4+67 \lambda ^2\LineNoNumber
  &\quad-110 \lambda -65\big)\phantom{\big(} \LineNumber
\num{\psg}{L}{07} &= 174 \lambda ^{10}-465 \lambda ^9-1500 \lambda ^8-1924 \lambda ^7-1810 \lambda ^6-5454 \lambda ^5+170 \lambda ^4+56 \lambda ^3\LineNoNumber
  &\quad-1280 \lambda ^2-465 \lambda +174\phantom{\big(} \LineNumber
\num{\psg}{L}{08} &= 29 \lambda ^{12}-34 \lambda ^{11}-274 \lambda ^{10}-234 \lambda ^9-9 \lambda ^8+20 \lambda ^7-428 \lambda ^6+20 \lambda ^5-9 \lambda ^4\LineNoNumber
  &\quad-234 \lambda ^3-274 \lambda ^2-34 \lambda +29\phantom{\big(} \LineNumber
\num{\psg}{L}{09} &= \lambda ^2+\lambda +1\phantom{\big(} \LineNumber
\num{\psg}{L}{10} &= 124 \lambda ^8-309 \lambda ^7-1238 \lambda ^6+189 \lambda ^5+1668 \lambda ^4+189 \lambda ^3-1238 \lambda ^2-309 \lambda \LineNoNumber
  &\quad +124\phantom{\big(} \LineNumber
\num{\psg}{L}{11} &= 31 \lambda ^{10}-116 \lambda ^9-177 \lambda ^8+152 \lambda ^7+90 \lambda ^6-344 \lambda ^5+90 \lambda ^4+152 \lambda ^3-177 \lambda ^2\LineNoNumber
  &\quad-116 \lambda +31\phantom{\big(} \LineNumber
\num{\psg}{L}{12} &= 62 \lambda ^{10}-201 \lambda ^9-560 \lambda ^8-16 \lambda ^7+1922 \lambda ^6+722 \lambda ^5+1922 \lambda ^4-16 \lambda ^3-560 \lambda ^2\LineNoNumber
  &\quad-201 \lambda +62\phantom{\big(} \LineNumber
\num{\psg}{L}{13} &= \lambda  \big(93 \lambda ^{10}-162 \lambda ^9-1227 \lambda ^8-792 \lambda ^7+2044 \lambda ^6+504 \lambda ^5-2872 \lambda ^4-6016 \lambda ^3\LineNoNumber
  &\quad -697 \lambda ^2-414 \lambda +259\big)\phantom{\big(}
\end{align}
\end{subequations}
\begin{align}
  C^2_{L,\psq} &=  a_{s,\ren} \frac{4 C_F}{3} \langle\light_{\set{\vector,\vector}} \rangle^\psq + a_{s,\ren}^2 \bigg[C_F T_F {\langle\heavy_{\set{\vector,\vector}} \rangle^\psg} n_{\heavy} \bigg(\frac{16 \hpl{1,0} \num{\psq}{L}{07}}{15 \lambda ^3 (\lambda +1)}-\frac{32 \hpl{-1,0} \num{\psq}{L}{11}}{15 (\lambda -1) \lambda ^3 (\lambda +1)^5} \LineNoNumber
  &\quad +\frac{8 \hpl{0} \num{\psq}{L}{10}}{45 (\lambda -1) \lambda ^2 (\lambda +1)^5}+\frac{64 \hpl{1} \num{\psq}{L}{05}}{9 (\lambda +1)^4}-\frac{128 \hpl{0,1} \num{\psq}{L}{06}}{3 (\lambda -1) (\lambda +1)^5}+\frac{16 \hpl{0,0} \num{\psq}{L}{09}}{15 (\lambda -1)^2 (\lambda +1)^5} \LineNoNumber
  &\quad +\frac{4 \num{\psq}{L}{08}}{135 \lambda ^2 (\lambda +1)^4}\bigg)+ \langle\heavy_{\set{\vector,\vector}} \rangle^\psq C_F T_F n_{\heavy} \bigg(-\frac{32 \hpl{1} \num{\psg}{L}{01}}{4725 \lambda ^3}+\frac{192 \hpl{0,0} \num{\psq}{L}{02}}{5 (\lambda -1)^4}-\frac{16 \hpl{0} \num{\psq}{L}{03}}{4725 (\lambda -1)^3} \LineNoNumber
  &\quad +\frac{8 \num{\psq}{L}{04}}{14175 (\lambda -1)^2 \lambda ^2}\bigg)+\langle\light_{\set{\vector,\vector}} \rangle^\psq \bigg(\bigg(\frac{2878}{135}-\frac{32 \mzv{3}}{5}\bigg) C_A C_F-\frac{184}{27} C_F T_F n_{\light} \LineNoNumber
  &\quad +\bigg(\frac{64 \mzv{3}}{5}-\frac{1906}{135}\bigg) C_F{}^2 \bigg) -\frac{160}{27} C_F T_F {\langle\light_{\set{\vector,\vector}} \rangle^\psg} n_{\light} \bigg]
\end{align}
\begin{subequations}
\begin{align}
\num{\psq}{L}{01} &= (\lambda -1)^6\phantom{\big(} \LineNumber
\num{\psq}{L}{02} &= \lambda ^2\phantom{\big(} \LineNumber
\num{\psq}{L}{03} &= 2 \lambda ^6-18 \lambda ^5+72 \lambda ^4-693 \lambda ^3-7623 \lambda ^2-7875 \lambda -525\phantom{\big(} \LineNumber
\num{\psq}{L}{04} &= 12 \lambda ^6-90 \lambda ^5-11783 \lambda ^4-110258 \lambda ^3-11783 \lambda ^2-90 \lambda +12\phantom{\big(} \LineNumber
\num{\psq}{L}{05} &= \left(\lambda ^2+\lambda +1\right) \left(\lambda ^2+8 \lambda +1\right)\phantom{\big(} \LineNumber
\num{\psq}{L}{06} &= \lambda  \left(\lambda ^4+2 \lambda ^3+4 \lambda ^2+2 \lambda +1\right)\phantom{\big(} \LineNumber
\num{\psq}{L}{07} &= (\lambda -1) \left(7 \lambda ^6-16 \lambda ^5-7 \lambda ^4-8 \lambda ^3-7 \lambda ^2-16 \lambda +7\right)\phantom{\big(} \LineNumber
\num{\psq}{L}{08} &= 252 \lambda ^8-261 \lambda ^7-1522 \lambda ^6+1653 \lambda ^5+3104 \lambda ^4+1653 \lambda ^3-1522 \lambda ^2-261 \lambda \LineNoNumber
  &\quad +252\phantom{\big(} \LineNumber
\num{\psq}{L}{09} &= \lambda  \big(21 \lambda ^9-27 \lambda ^8-132 \lambda ^7+36 \lambda ^6+285 \lambda ^5+33 \lambda ^4-126 \lambda ^3-86 \lambda ^2+8 \lambda +20\big) \LineNumber
\num{\psq}{L}{10} &= 42 \lambda ^{10}-75 \lambda ^9-204 \lambda ^8+397 \lambda ^7+604 \lambda ^6+420 \lambda ^5+244 \lambda ^4+37 \lambda ^3-244 \lambda ^2 \LineNoNumber
  &\quad -75 \lambda +42\phantom{\big(} \LineNumber
\num{\psq}{L}{11} &= 7 \lambda ^{12}-2 \lambda ^{11}-46 \lambda ^{10}-34 \lambda ^9+81 \lambda ^8+116 \lambda ^7+156 \lambda ^6+116 \lambda ^5+81 \lambda ^4\LineNoNumber
  &\quad-34 \lambda ^3  -46 \lambda ^2-2 \lambda +7\phantom{\big(}
\end{align}
\end{subequations}
\begin{align}
  C^{\ns,2}_{2,\psq} &= 1 +\frac{a_{s,\ren} C_F}{3}+a_{s,\ren}^2 \bigg[C_F T_F n_{\heavy} \bigg(\frac{2 \hpl{1} \num[\ns]{\psq}{2}{01}}{14175 \lambda ^3}-\frac{8 \hpl{0,0} \num[\ns]{\psq}{2}{02}}{45 (\lambda -1)^4}+\frac{2 \hpl{0} \num[\ns]{\psq}{2}{04}}{14175 (\lambda -1)^3} \LineNoNumber
  &\quad +\frac{\num[\ns]{\psq}{2}{03}}{42525 (\lambda -1)^2 \lambda ^2}\bigg)+\bigg(\frac{3677}{135}-\frac{128 \mzv{3}}{5}\bigg) C_A C_F-8 C_F T_F n_{\light}+\bigg(\frac{96 \mzv{3}}{5}-\frac{4189}{810}\bigg) C_F{}^2\bigg]
\end{align}
\begin{subequations}
\begin{align}
\num[\ns]{\psq}{2}{01} &= (\lambda -1)^2 \left(8 \lambda ^4+103 \lambda ^3-5262 \lambda ^2+103 \lambda +8\right)\phantom{\big(} \LineNumber
\num[\ns]{\psq}{2}{02} &= 20 \lambda ^4-80 \lambda ^3+141 \lambda ^2-80 \lambda +20\phantom{\big(} \LineNumber
\num[\ns]{\psq}{2}{03} &= -48 \lambda ^6-450 \lambda ^5-508273 \lambda ^4+1114802 \lambda ^3-508273 \lambda ^2-450 \lambda -48\phantom{\big(} \LineNumber
\num[\ns]{\psq}{2}{04} &= 8 \lambda ^6+63 \lambda ^5-5697 \lambda ^4+64113 \lambda ^3-76167 \lambda ^2-22050 \lambda +36750\phantom{\big(} 
\end{align}
\end{subequations}
\begin{align}
  C^{\ns,2}_{L,\psq} &=  a_{s,\ren}\frac{4 C_F}{3} + a_{s,\ren}^2 \bigg[C_F T_F n_{\heavy} \bigg(-\frac{32 \hpl{1} \num[ns]{\psq}{L}{01}}{4725 \lambda ^3}+\frac{192 \hpl{0,0} \num[ns]{\psq}{L}{02}}{5 (\lambda -1)^4}-\frac{16 \hpl{0} \num[ns]{\psq}{L}{03}}{4725 (\lambda -1)^3}\LineNoNumber
  &\quad +\frac{8 \num[ns]{\psq}{L}{04}}{14175 (\lambda -1)^2 \lambda ^2}\bigg)+\bigg(\frac{2878}{135}-\frac{32 \mzv{3}}{5}\bigg) C_A C_F-\frac{184}{27} C_F T_F n_{\light}+\bigg(\frac{64 \mzv{3}}{5}-\frac{1906}{135}\bigg) C_F{}^2\bigg]
\end{align}
\begin{subequations}
\begin{align}
\num[\ns]{\psq}{L}{01} &= (\lambda -1)^6\phantom{\big(} \LineNumber
\num[\ns]{\psq}{L}{02} &= \lambda ^2\phantom{\big(} \LineNumber
\num[\ns]{\psq}{L}{03} &= 2 \lambda ^6-18 \lambda ^5+72 \lambda ^4-693 \lambda ^3-7623 \lambda ^2-7875 \lambda -525\phantom{\big(} \LineNumber
\num[\ns]{\psq}{L}{04} &= 12 \lambda ^6-90 \lambda ^5-11783 \lambda ^4-110258 \lambda ^3-11783 \lambda ^2-90 \lambda +12\phantom{\big(}
\end{align}
\end{subequations}
As elementary consistency checks, we verified that the dependence on the gauge
parameter cancels and that all singularities are removed by mass factorisation,
as expected. Moreover, the one--loop results are in perfect agreement with the
Mellin moments obtained from the exact expressions.

Since no exact results are available for all heavy-quark coefficient functions at NLO, the
comparison is extended to the asymptotic region $m^2 \ll Q^2$. In this limit,
the Mellin moments for $C_{2,i}$ are again found to agree with the results
available in the literature, see Ref.~\cite{Bierenbaum:2007qe} where the Mellin space result have been presented in terms of harmonic sums~\cite{Vermaseren:1998uu,Blumlein:1998if}. In addition, they permit a straightforward
determination of the higher--order corrections in $\lambda$.
\begin{align}
C^2_{2,\psq} &= \langle \bar{\light}_{\set{\vector,\vector}} \rangle^\text{q} + a_{s,\ren} \bigg\{ \frac{1}{3}  \langle \bar{\light}_{\set{\vector,\vector}} \rangle^\text{q} C_F\bigg\} + a_{s,\ren}^2 \bigg\{ -\frac{266}{81} \langle \light_{\set{\vector,\vector}} \rangle ^\psq  C_F T_F n_\light \LineNoNumber
 &\qquad + \langle \heavy_{\set{\vector,\vector}} \rangle ^\psq \bigg[ C_F T_F n_\heavy \bigg(-\frac{32}{9} \hpl{0,0}-\frac{80}{27} \hpl{0}+\frac{160}{9} \hpl{0,0} \lambda+\frac{226}{27} \hpl{0} \lambda-\frac{590}{81}\LineNoNumber
 &\qquad -\frac{1664}{45} \hpl{0,0} \lambda^2+\frac{9671}{162} \lambda+\frac{34532}{675} \hpl{0} \lambda^2-\frac{1318871}{10125} \lambda^2\bigg) + \order{\lambda^3} \bigg] + \langle \bar{\light}_{\set{\vector,\vector}} \rangle^\text{q} \bigg[-8 C_F T_F n_\light\LineNoNumber
 &\qquad+C_F T_F n_\heavy \bigg(-\frac{32}{9} \hpl{0,0}-\frac{56}{15} \hpl{0,0} \lambda^2+\frac{92}{27} \lambda-\frac{112}{9} \hpl{0} \lambda-\frac{140}{27} \hpl{0}-\frac{344}{27}-\frac{2482}{225} \hpl{0} \lambda^2\LineNoNumber
 &\qquad+\frac{9689}{2250} \lambda^2\bigg)+C_F C_A \bigg(-\frac{128}{5} \mzv{3}+\frac{3677}{135}\bigg)+C_F^2 \bigg(\frac{96}{5} \mzv{3}-\frac{4189}{810}\bigg) \bigg] + \order{\lambda^3} \bigg\} + \order{a_{s,\ren}^3}
\end{align}
\begin{align}
C^2_{2,\psg} &= a_{s,\ren} \bigg\{ -\langle \light_{\set{\vector,\vector}} \rangle ^\psg T_F n_\light+ \langle \heavy_{\set{\vector,\vector}} \rangle ^\psg \bigg[ T_F n_\heavy \bigg(-1+6 \lambda-12 \lambda^2-\frac{4}{3} \hpl{0}+\frac{20}{3} \hpl{0} \lambda\LineNoNumber
&\qquad -\frac{44}{3} \hpl{0} \lambda^2\bigg) + \order{\lambda^3} \bigg]  \bigg\} + a_{s,\ren}^2 \bigg\{ \langle \light_{\set{\vector,\vector}} \rangle ^\psg \bigg[ C_A T_F n_\light \bigg(-4 \mzv{3}+\frac{115}{162}\bigg)+C_F T_F n_\light \bigg(\frac{32}{5} \mzv{3}\LineNoNumber
&\qquad -\frac{4799}{405}\bigg) \bigg] + \langle \heavy_{\set{\vector,\vector}} \rangle ^\psg \bigg[ C_A T_F n_\heavy \bigg(-4 \mzv{3}-\frac{8}{5} \mzv{3} \lambda-\frac{11}{162}-\frac{16}{5} \mzv{3} \lambda^2+\frac{44}{9} \hpl{0,0}-\frac{70}{27} \hpl{0}\LineNoNumber
&\qquad-\frac{190}{9} \hpl{0,0} \lambda+\frac{925}{54} \hpl{0} \lambda+\frac{2168}{45} \hpl{0,0} \lambda^2+\frac{27113}{3240} \lambda-\frac{79799}{675} \hpl{0} \lambda^2-\frac{321337}{40500} \lambda^2\bigg)\LineNoNumber
&\qquad+C_F T_F n_\heavy \bigg(\frac{16}{5} \mzv{3} \lambda+\frac{32}{5} \mzv{3}+\frac{32}{5} \mzv{3} \lambda^2-\frac{32}{9} \hpl{0,0}-\frac{160}{27} \hpl{0}+\frac{262}{3} \hpl{0,0} \lambda+\frac{1901}{18} \hpl{0} \lambda\LineNoNumber
&\qquad+\frac{1961}{405}-\frac{14407}{1080} \lambda-\frac{14728}{45} \hpl{0,0} \lambda^2-\frac{147959}{20250} \lambda^2-\frac{196786}{675} \hpl{0} \lambda^2\bigg) +\order{\lambda^3} \bigg] \bigg\} +\order{a_{s,\ren}^3}
\end{align}
\begin{align}
C^{2,\ns}_{2,\psq} &= \langle \bar{\light}^a_{\set{\vector,\vector}} \rangle^\text{q} + a_{s,\ren} \frac{1}{3}  \langle \bar{\light}^a_{\set{\vector,\vector}} \rangle^\text{q} C_F + a_{s,\ren}^2 \langle \bar{\light}^a_{\set{\vector,\vector}} \rangle^\text{q} \bigg[-8 C_F T_F n_\light \LineNoNumber
&\qquad +C_F T_F n_\heavy \bigg(-\frac{32}{9} \hpl{0,0}-\frac{56}{15} \hpl{0,0} \lambda^2+\frac{92}{27} \lambda-\frac{112}{9} \hpl{0} \lambda-\frac{140}{27} \hpl{0}-\frac{344}{27}-\frac{2482}{225} \hpl{0} \lambda^2\LineNoNumber
&\qquad+\frac{9689}{2250} \lambda^2\bigg)+C_F C_A \bigg(-\frac{128}{5} \mzv{3}+\frac{3677}{135}\bigg)+C_F^2 \bigg(\frac{96}{5} \mzv{3}-\frac{4189}{810}\bigg) \bigg] + \order{\lambda^3} \bigg\} + \order{a_{s,\ren}^3}
\end{align}
\begin{align}
C^2_{L,\psq} &=  a_{s,\ren} \bigg\{ \frac{4}{3}  \langle \bar{\light}_{\set{\vector,\vector}} \rangle^\text{q} C_F\bigg\} + a_{s,\ren}^2 \bigg\{ -\frac{160}{27} C_F T_F n_\light \langle \light_{\set{\vector,\vector}} \rangle ^\psq  C_F T_F n_\light \LineNoNumber
 &\qquad + \langle \heavy_{\set{\vector,\vector}} \rangle ^\psq \bigg[ C_F T_F n_\heavy \bigg(\frac{64}{3} \hpl{0,0} \lambda-\frac{160}{27}+\frac{220}{9} \hpl{0} \lambda-\frac{832}{15} \hpl{0,0} \lambda^2+\frac{2393}{27} \lambda+\frac{6536}{225} \hpl{0} \lambda^2\LineNoNumber
 &\qquad-\frac{773258}{3375} \lambda^2\bigg)+ \order{\lambda^3} \bigg] + \langle \bar{\light}_{\set{\vector,\vector}} \rangle^\text{q} \bigg[-\frac{184}{27} C_F T_F n_\light+C_F T_F n_\heavy \bigg(-32 \hpl{0} \lambda-\frac{16}{9} \hpl{0}\LineNoNumber
 &\qquad-\frac{184}{27}+\frac{192}{5} \hpl{0,0} \lambda^2-\frac{680}{9} \lambda-\frac{2912}{25} \hpl{0} \lambda^2-\frac{170192}{1125} \lambda^2\bigg)+C_F C_A \bigg(-\frac{32}{5} \mzv{3}+\frac{2878}{135}\bigg)\LineNoNumber
 &\qquad+C_F^2 \bigg(\frac{64}{5} \mzv{3}-\frac{1906}{135}\bigg)\bigg] + \order{\lambda^3} \bigg\} + \order{a_{s,\ren}^3}
\end{align}
\begin{align}
C^2_{L,\psg} &= a_{s,\ren} \bigg\{ \frac{4}{3}\langle \light_{\set{\vector,\vector}} \rangle ^\psg  T_F n_\light + \langle \heavy_{\set{\vector,\vector}} \rangle ^\psg \bigg[ T_F n_\heavy \bigg(8 \hpl{0} \lambda-16 \hpl{0} \lambda^2+\frac{4}{3}+\frac{20}{3} \lambda-\frac{64}{3} \lambda^2\bigg) \LineNoNumber
&\qquad+\order{\lambda^3} \bigg]  \bigg\} + a_{s,\ren}^2 \bigg\{ \langle \light_{\set{\vector,\vector}} \rangle ^\psg \bigg[ \frac{346}{27} C_A T_F n_\light+C_F T_F n_\light \bigg(-\frac{32}{5} \mzv{3}-\frac{232}{135}\bigg)\bigg] \LineNoNumber
&\qquad+ \langle \heavy_{\set{\vector,\vector}} \rangle ^\psg \bigg[C_A T_F n_\heavy \bigg(\frac{48}{5} \mzv{3} \lambda-\frac{52}{3} \hpl{0,0} \lambda+\frac{96}{5} \mzv{3} \lambda^2+\frac{168}{5} \hpl{0,0} \lambda^2+\frac{323}{9} \hpl{0} \lambda+\frac{346}{27}\LineNoNumber
&\qquad-\frac{9134}{75} \hpl{0} \lambda^2-\frac{11353}{540} \lambda+\frac{38087}{6750} \lambda^2\bigg)+C_F T_F n_\heavy \bigg(-\frac{16}{9} \hpl{0}-\frac{32}{5} \mzv{3}-\frac{96}{5} \mzv{3} \lambda\LineNoNumber
&\qquad-\frac{192}{5} \mzv{3} \lambda^2+\frac{220}{3} \hpl{0,0} \lambda-\frac{232}{135}+\frac{505}{9} \hpl{0} \lambda-\frac{5704}{15} \hpl{0,0} \lambda^2+\frac{35221}{540} \lambda-\frac{81278}{225} \hpl{0} \lambda^2\LineNoNumber
&\qquad-\frac{587707}{6750} \lambda^2\bigg) +\order{\lambda^3} \bigg] \bigg\} +\order{a_{s,\ren}^3}
\end{align}
\begin{align}
C^{2,\ns}_{L,\psq} &=  a_{s,\ren} \frac{4}{3}  \langle \bar{\light}^a_{\set{\vector,\vector}} \rangle^\text{q} C_F + a_{s,\ren}^2 \langle \bar{\light}^a_{\set{\vector,\vector}} \rangle^\text{q} \bigg[ -\frac{184}{27} C_F T_F n_\light+C_F T_F n_\heavy \bigg(-32 \hpl{0} \lambda\LineNoNumber
&\qquad-\frac{16}{9} \hpl{0}-\frac{184}{27}+\frac{192}{5} \hpl{0,0} \lambda^2-\frac{680}{9} \lambda-\frac{2912}{25} \hpl{0} \lambda^2-\frac{170192}{1125} \lambda^2\bigg)+C_F C_A \bigg(-\frac{32}{5} \mzv{3}\LineNoNumber
&\qquad+\frac{2878}{135}\bigg)+C_F^2 \bigg(\frac{64}{5} \mzv{3}-\frac{1906}{135}\bigg) \bigg] + \order{\lambda^3} \bigg\} + \order{a_{s,\ren}^3}
\end{align}
Here, one should note that the limit $\lambda \to 0$ agrees with the limit $\kappa \to 0$ only to leading power $\order{\lambda^0}$. 
For higher powers, one must use the expansions
\begin{align}
  \lambda &= \kappa -2 \kappa ^2+5 \kappa ^3+\order{\kappa ^4}
\, , 
 \LineNumber
  \log{\lambda} &= \log (\kappa )-2 \kappa +3 \kappa ^2-\frac{20 \kappa ^3}{3}+\order{\kappa ^4} 
\, .
\end{align}

\subsection{Numerical results for fixed Mellin moments}

A further non‑trivial consistency check comes from a comparison of our results (hereafter KMS) with the semi‑analytical results of Ref.~\cite{Riemersma:1994hv} (RSvN), which provide parametrisations of the heavy‑quark coefficient functions as functions of $z$ and $\xi$. 
Since these parametrisations are given in $z$‑space, comparison with KMS requires a numerical integration to obtain a Mellin moment. 
This integration can easily be performed with sufficient precision (about six‑digit accuracy). Any deviations between KMS and RSvN therefore test the accuracy of the latter parametrisations.
From the numerical data in Tables \ref{tab:C2qPS}, \ref{tab:C2qNS}, \ref{tab:C1g}, and~\ref{tab:C2g}, it is evident that the leading‑order results are in perfect agreement, and the NLO results show very good overall consistency with the semi‑analytical expressions. 
In the gluon channel, where no independent comparison has been available until now, the agreement reaches the permille level, which corresponds to the accuracy of the RSvN parametrisations. 
We have adopted again the scale choice $\mu_r=\mu_f=Q$ for the moments of coefficient functions in all Tables.

To obtain a meaningful comparison of the quark non-singlet contributions $C_{2,{\rm q}}^{N,{\rm ns}}$ with the RSvN results, the heavy-quark contribution to the form factor (see, e.g. Refs.~\cite{Chuvakin:1999nx,Blumlein:2016xcy}) must be subtracted twice from our result. 
Concretely, to NLO accuracy the quantity compared to RSvN is
\begin{align}
\label{eq:F2ns-subtract}
    C_{2,{\rm q}}^{N,{\rm ns}}
    \biggl|_{\rm RSvN} = 
    C_{2,{\rm q}}^{N,{\rm ns}}
    \biggl|_{\rm KMS} - 2 F_{\rm h}^{(2)}(\kappa)
    \qquad\qquad
    {\rm{at}}~\mu_r=\mu_f=Q
    \, ,
\end{align}
where $F_{\rm h}^{(2)}(\kappa)$ (being proportional to $\delta(1-z)$, i.e. a constant in Mellin $N$-space) denotes the two-loop heavy-quark virtual contribution to the light-quark form factor. 
This subtraction is required because, through the use of the optical theorem, the forward amplitude includes vertex corrections involving heavy quarks in loops. 
These contributions are absent in the RSvN calculation, which includes only open heavy quarks in the final state as required, e.g., for  charm- or bottom-quark structure functions $F_2^{\rm c{\bar c}}$ or $F_2^{\rm b{\bar b}}$. 
Further subtractions are not necessary, since the KMS calculation uses $\alpha_s$ with $n_\light$ flavours, where the heavy quark is decoupled; this is the same renormalization scheme as in RSvN. 
For comparison, Ref.~\cite{Blumlein:2016xcy} employs $\alpha_s$ with $n_\light + n_\heavy$ active flavours, i.e. the heavy quark is treated as active. In that scheme, additional contributions from heavy‑quark self‑energy insertions on external gluon lines must be subtracted. With the decoupled‑flavour definition of $\alpha_s$ used here, these contributions are absent at NLO.
The result for $F_{\rm h}^{(2)}(\kappa)$ in Eq.~(\ref{eq:F2ns-subtract}) found in Ref.~\cite{Blumlein:2016xcy}, with $\beta=\sqrt{1-4 \kappa}$, reads
\begin{align}
F_{\rm h}^{(2)}(\kappa) &= 2 a_{s,\ren}^2  C_F T_F n_\heavy
\Biggl\{\frac{3355}{81}-\frac{952}{9} \kappa+\left(32 \kappa^2-\frac{16}{3}\right)\zeta_3
-\left(\frac{440}{9\xi}-\frac{530}{27}\right)\ln(\kappa)
\LineNoNumber
&
+\beta\left(\frac{184}{9}\kappa-\frac{76}{9}\right)\bigg[\text{Li}_2\left(\frac{\beta+1}
{\beta-1}\right)-\text{Li}_2\left(\frac{\beta-1}{\beta+1}\right)\bigg]+\left(\frac{8}{3}-16 \kappa^2 \right)\bigg[\text{Li}_3\left(\frac{\beta-1}
{\beta+1}\right)\LineNoNumber
&+\text{Li}_3\left(\frac{\beta+1}{\beta-1}\right)\bigg]\Biggr\} .
\end{align}

\input{./tables/numerics.tex}

\section{Conclusions} 
\label{sec:conclusions}

The extension of theoretical computations for heavy‑quark DIS structure functions necessitates the development of new methodological approaches. In the present work, we have computed the NLO QCD corrections to all coefficient functions for $F_2$ and $F_L$, obtaining exact expressions for Mellin moments up to $N=22$ while retaining the full heavy‑quark mass dependence. 
Our results agree with the complete numerical computations available in the literature and correctly reproduce all relevant asymptotic limits, such as the regime $Q^2 \gg m^2$. 
A particular strength of the method is that it starts from the heavy‑quark DIS structure functions themselves, enabling the simultaneous extraction of both the heavy‑quark DIS coefficient functions and the heavy‑quark operator matrix elements. The latter emerge naturally from the factorisation in the limit $Q^2$ much larger than $m^2$.

For the gluon‑initiated NLO coefficient function, we have validated an existing numerical parametrization and found agreement typically at the level of a few permille. The computation has been performed for fixed Mellin moments, and with moderate computational resources we have been able to determine moments up to $N=22$. In addition, we have outlined strategies to extend the computation to even higher moments.
Potential future extensions of the formalism and computational strategy include the treatment of charged‑current processes, i.e. the structure functions $xF_3$ and the analysis of the polarised structure function $g_1$.

The framework developed here can be extended naturally to the three‑loop level, enabling the computation of Mellin moments of heavy‑quark DIS coefficient functions at NNLO. At this order, several technical challenges arise. 
The IBP reduction of the full three‑loop integral families is expected to be a major computational bottleneck.
One possible strategy is to perform the expansion around $\omega = 1/z \to 0$ from the outset; however, this introduces additional complications.
The amplitudes become considerably larger due to the expansion, a systematic tensor decomposition is required to remove the dependence on the external momentum $p$ in the numerator, and the ensuing IBP reduction of three‑loop topologies with internal masses, multiple dots, and non-trivial numerators remains highly demanding.
Although such an approach may restrict the analysis to relatively low Mellin moments, it remains a promising avenue for future developments.
The generation of the three‑loop amplitudes must also be optimised by minimising the number of integral families and choosing them such that only a small subset of denominators must the expanded to recover the $N$-dependence. 
This streamlines the expansion procedure and yields a compact and manageable set of integrals.

Elliptic structures are expected to arise in the computation of Mellin moments, as certain contributions reduce to the three‑loop banana topology.
For the neutral current, these elliptic integrals should occur only in the gluon channel, since quark‑initiated diagrams cannot involve a cut with four heavy quarks.

We have thus presented a set of promising new developments for the computation of heavy‑quark DIS observables, and the corresponding analytic results are provided in the ancillary files together with the submission to \url{www.arXiv.org}.

\vspace{2mm}
\section*{Acknowledgments}
We would like to thank Joshua Davies and Giulio Falcioni for useful discussions.
The work M.K. and S.M. has been supported by the European Research Council 
through  ERC Advanced Grant 101095857, {\it Conformal-EIC}.
The research of K.~S.~was
funded by the European Union’s Horizon Research and Innovation Program under the Marie Skłodowska-Curie grant agreement No.~101204018.

\appendix
%
%
\renewcommand{\theequation}{\ref{sec:appA}.\arabic{equation}}
\setcounter{equation}{0}

\section{Flavour Factors}
\label{sec:appA}

\appendix
\label{app:flavourfactors}
To classify the possible flavour structures, we consider Feynman graphs with up to two quark loops and at most one open light-quark line, coupled to two external vector bosons.
These graphs fall into two broad categories: those involving flavour mixing between light and heavy quarks, and those in which quark mass is preserved along each quark line.
We exclude charged-current interactions and restrict our focus to neutral currents, which preserve flavour along the quark lines.
All QCD vertices are taken to be flavour-preserving.
Considering all possible ways of attaching two bosons, each being either a vector or an axial-vector boson, to quark lines that may be heavy or light, we obtain the fifteen flavour classes.

We formalise the above analysis by examining the flavour structures permitted by the global flavour symmetry groups in the case of neutral currents, constructing an explicit representation using linear algebra (see Ref.~\cite{lang1987linear}), in particular via direct sums. 
To this end, we consider $n_\light$ light-quark flavours and $n_h$ heavy-quark flavours, associated with the symmetry groups $\mathrm{SU}(n_\light)$ and $\mathrm{SU}(n_\heavy)$, generated by $\lambda^a_\light$ and $\lambda^a_\heavy
$, respectively.
The light- and heavy-quark flavours are then embedded into a single unified flavour vector, defined in an enlarged flavour space,
\begin{align}
  \flavour &= \big( \light_1,...,\light_{n_\light},h_1,\dots,h_{n_h} \big) \period
\end{align}
In this representation, the flavour group generators appear as direct sums, each extended by zero matrices so as to act within the enlarged flavour space,
\begin{subequations}
\begin{align}
  \hat{\lambda}_\light &= \lambda^a_\light \oplus 0_{n_\heavy \times n_\heavy} \comma \LineNumber
  \hat{\lambda}^{a}_\heavy &= 0_{n_\light \times n_\light} \oplus \lambda^a_\heavy \period
\end{align}
\end{subequations}
Here, we adopt the convention of indicating embeddings into the extended flavour space with a circumflex.
The charge matrices $Q_{i,\light}$ and $Q_{i,\heavy}$ describe the couplings of vector bosons to light and heavy quarks; these are diagonal for neutral currents, as they preserve quark flavour.
For neutral currents mediated by photons or $Z$-bosons, charge matrices exist for both vector and axial-vector couplings, and we decompose them as follows
\begin{subequations}
\begin{align}
  \hat{Q}_{i,\light} &= Q_{i,\light} \oplus 0_{n_\heavy \times n_\heavy} \comma \quad i \in \set{\axial, \vector} \comma \LineNumber
  \hat{Q}_{i,\heavy} &= 0_{n_\light \times n_\light} \oplus Q_{i,\heavy} \comma \quad i \in \set{\axial, \vector} \period
\end{align}
\end{subequations}
Consequently, the total charge matrices in the enlarged flavour space are obtained by adding the light– and heavy–flavour charge matrices,
\begin{align}
  Q_{i} &= \hat{Q}_{i,\light} + \hat{Q}_{i,\heavy} \comma \quad i \in \set{\axial, \vector} \period
\end{align}
For notational convenience, the embeddings of the light– and heavy–flavour unit matrices within the extended flavour space are also introduced,
\begin{subequations}
\begin{align}
  \hat{1}_\light &= 1_{n_\light \times n_\light} \oplus 0_{n_\heavy \times n_\heavy} \LineNumber
  \hat{1}_\heavy &= 0_{n_\light \times n_\light} \oplus 1_{n_\heavy \times n_\heavy} 
\end{align}
\end{subequations}
In direct analogy with the total charge matrix, we define the total unit matrix in the extended flavour space as the sum of the embedded light- and heavy-flavour unit matrices,
\begin{align}
  1_\flavour = \hat{1}_\light + \hat{1}_\heavy \period
\end{align}

With this notation in place, the flavour structures arising in processes involving two vector bosons mediated by neutral currents can be analysed.
To identify all possible structures, the combinatorics of multiplying charge or unit matrices and taking traces over them is systematically examined.
For this purpose, a shorthand notation is introduced for traces over quark lines to which either a vector or an axial–vector boson is coupled.
In this notation, overlined symbols indicate external quark lines, while non-overlined symbols denote internal quark loops,
\begin{subequations}
\begin{align}
f_B &\equiv \tr \left( \prod_{i \in B}  Q_{i,f} \right) \comma && B \subseteq \set {\vector,\axial}\comma && f \in \set {\light,\heavy} \comma \LineNumber
\bar{f}^a_{\bar{B}} &\equiv \frac{1}{T_{R.f}} \tr \left( \prod_{i \in \bar{B}}  Q_{i,f} \lambda_f^a \right) \hat{\lambda}_{f}^a  \comma && \bar{B} \subseteq \set {\vector,\axial}\comma && f \in \set {\light,\heavy} \comma \LineNumber
\bar{f}_{\bar{B}} &\equiv  \frac{1}{n_{f}}  \tr \left( \prod_{i \in \bar{B}}  Q_{i,f} \right) \hat{1}_{f} \comma && \bar{B} \subseteq \set {\vector,\axial}\comma && f \in \set {\light,\heavy} \comma
\end{align}
\end{subequations}
where $T_{R,f}$ denotes the trace normalisation of the flavour $\mathrm{SU}(n_f)$ generators. The subscripts encode the types of bosons attached to a given quark line. These sets are symmetric in all their entries, as the quark charge matrices are diagonal and therefore commute with one another. Consequently, the ordering of bosons along a given quark line is irrelevant for the flavour factors.
A list notation is therefore employed to collect the traces over quark charge matrices,
\begin{subequations}
\begin{align}
\left\langle x_1, \dots,x_n \right\rangle^\psg &\equiv  \prod_{i = 1}^n x_i \comma \LineNumber
\left\langle x_1, \dots, x_n \right\rangle^\psq &\equiv 1_\flavour \prod_{i = 1}^n x_i \period
\end{align}
\end{subequations}
The entries of these lists may thus be arranged in any order, as the product is fully commutative for a single open quark line, which is the situation under consideration.
Furthermore, for products of matrices that are not traced over, we use the fact that any complex-valued $n_f \times n_f$ matrix admits a decomposition in terms of the generators of $\mathrm{SU}(n_f)$.
Such decompositions equally apply to the embedded matrices in the enlarged flavour space,
\begin{align}
\prod_{i \in B} \hat{Q}_{i,f} &= \frac{1}{n_{f}} \tr \left( \prod_{i \in B} \hat{Q}_{i,f}  \right) \hat{1}_f + \frac{1}{T_{R,f}}  \sum_a \tr \left( \prod_{i \in B} \hat{Q}_{i,f}  \hat{\lambda}_{f}^a \right) \hat{\lambda}_{f}^a \LineNoNumber
&= \frac{1}{n_{f}} \tr \left( \prod_{i \in B} Q_{i,f}  \right) \hat{1}_f + \frac{1}{T_{R,f}}  \sum_a \tr \left( \prod_{i \in B} Q_{i,f} \lambda_{f}^a \right) \hat{\lambda}_{f}^a \LineNoNumber
&= \bar{f}_{\bar{B}} \hat{1}_f  + \bar{f}^a_{\bar{B}}  \hat{\lambda}_{f}^a \comma \qquad \bar{B} \subseteq \set {\vector,\axial}\comma \qquad f \in \set {\light,\heavy} \period \phantom{\left( \prod_{i \in B} Q_{i,f}  \right)}
\end{align}

To analyse the possible flavour structures, we begin with the simplest case, flavour factors consisting solely of traces, without any accompanying matrices.
The first example is the trace over a product of two quark charge matrices,
\begin{align}
	\langle \flavour_{\set{i,j}} \rangle^\psg &\equiv \tr (Q_i Q_j ) \LineNoNumber
    &= \langle \light_{\set{i,j}} \rangle^\psg  + \langle \heavy_{\set{i,j}} \rangle^\psg \period
\end{align}
Here, we have used the fact that the charge matrices are diagonal by assumption.
Since diagonal matrices commute, this flavour factor is symmetric in $i$ and $j$.
It corresponds to the graphs (1-2) in \autoref{fig:flavour_graphs} in which both vector bosons are attached to the same quark loop.
On the other hand, we have the product of traces over individual quark charge matrices,
\begin{align}
  \langle \flavour_{\set{i}} \flavour_{\set{j}} \rangle ^\psg &\equiv \tr (Q_i)\tr (Q_j) \LineNoNumber
  &= \langle \light_{\set{i}} \light_{\set{j}} \rangle ^\psg + \langle \light_{\set{i}} \heavy_{\set{j}} \rangle ^\psg + \langle \heavy_{\set{i}} \light_{\set{j}} \rangle ^\psg +\langle \heavy_{\set{i}} \heavy_{\set{j}} \rangle ^\psg \period
\end{align}
The flavour factors are symmetric in $i$ and $j$ only if the corresponding quark lines are of the same type, since the charge matrices may differ between light and heavy quark flavours.
These structures correspond to graphs (3-6)  in \autoref{fig:flavour_graphs}, where the vector bosons are attached to separate quark loops.
Both former and latter flavour factors arise only in processes involving external gluons or ghosts.

Similar flavour structures emerge for external quarks, now multiplied with the unit matrix to account for quark flavours.
On the one hand, we have the trace of a product of quark charge matrices multiplied with the unit matrix,
\begin{align}
\langle \flavour_{\set{i,j}} \rangle ^\psq &\equiv \tr (Q_i Q_j ) \, 1_\flavour \LineNoNumber
  &= \langle \light_{\set{i,j}} \rangle ^\psq + \langle \heavy_{\set{i,j}} \rangle ^\psq \period
\end{align}
These flavour factors correspond to graphs (7–8) in \autoref{fig:flavour_graphs}, which feature an open quark line with both vector bosons attached to the same quark loop.
On the other hand, we have the product of traces of individual quark charge matrices, each multiplied by the unit matrix,
\begin{align}
\langle \flavour_{\set{i}} \flavour_{\set{j}}  \rangle ^\psq &\equiv \tr (Q_i)\tr (Q_j) \, 1_\flavour\LineNoNumber
  &= \langle \light_{\set{i}} \light_{\set{j}}  \rangle ^\psq + \langle \light_{\set{i}} \heavy_{\set{j}}  \rangle ^\psq + \langle \light_{\set{i}} \heavy_{\set{j}}  \rangle ^\psq + \langle \heavy_{\set{i}} \heavy_{\set{j}}  \rangle ^\psq \period
\end{align}
These flavour factors correspond to graphs (9–12) in \autoref{fig:flavour_graphs}, which feature an open quark line with the two vector bosons attached to separate quark loops.

Additional flavour structures can be constructed by not tracing over all quark charge matrices.
In the case of a product of two charge matrices without a trace, we find that
\begin{align}
  \langle \bar{\flavour}_{\set{i,j}} \rangle^\text{q} &\equiv Q_i Q_j \LineNoNumber
  &=\langle \bar{\light}^a_{\set{i,j}} \rangle^\text{q} \hat{\lambda}^a_\light + \langle \bar{\light}_{\set{i,j}} \rangle^\text{q} \hat{1}_\light + \langle \bar{\heavy}^a_{\set{i,j}} \rangle^\text{q} \hat{\lambda}^a_\heavy + \langle \bar{\heavy}_{\set{i,j}} \rangle^\text{q} \hat{1}_\heavy 
\period
\end{align}
Here, we have employed the decomposition of the quark charge matrices.
This flavour factor is symmetric in $i$ and $j$, owing to the commutativity of diagonal matrices.
It corresponds to graph (13) in \autoref{fig:flavour_graphs}, where both vector bosons are attached to the same open quark line.
The final flavour structure arises from taking the trace over only a single charge matrix,
\begin{align}
  \langle \bar{\flavour}_{\set{i}}
  \flavour_{\set{j}} \rangle ^{\text{q}} 
  &\equiv \hat{Q}_i  \tr \big( \hat{Q}_j \big) \LineNoNumber
  &=  \langle \bar{\light}^a_{\set{i}}
  \light_{\set{j}} \rangle ^{\text{q}} \hat{\lambda}^a_\light + \langle \bar{\light}_{\set{i}}
  \light_{\set{j}} \rangle ^{\text{q}} \hat{1}_\light + \langle \bar{\light}^a_{\set{i}}
  \heavy_{\set{j}} \rangle ^{\text{q}} \hat{\lambda}^a_\light + \langle \bar{\light}_{\set{i}}
  \heavy_{\set{j}} \rangle ^{\text{q}} \hat{1}_\light + \langle \bar{\heavy}^a_{\set{i}}
  \heavy_{\set{j}} \rangle ^{\text{q}} \hat{\lambda}^a_\heavy \LineNoNumber
  &\qquad + \langle \bar{\heavy}_{\set{i}}
  \heavy_{\set{j}} \rangle ^{\text{q}} \hat{1}_\heavy + \langle \bar{\heavy}^a_{\set{i}}
  \light_{\set{j}} \rangle ^{\text{q}} \hat{\lambda}^a_\heavy + \langle \bar{\heavy}_{\set{i}}
  \light_{\set{j}} \rangle ^{\text{q}} \hat{1}_\heavy 
\, ,
\end{align}
where the matrices have been decomposed as for the last flavour structure.
By construction, this flavour factor is not symmetric.
It corresponds to the graphs (14–15)  in \autoref{fig:flavour_graphs}, where the vector bosons are attached to both a quark loop and an open quark line, respectively.

All flavour structures involving external quark lines contain contributions from both light- and heavy-quark flavours.
Since we restrict our discussion to external light quarks, we project onto the corresponding subspace by multiplying with the embedded unit matrix of the light-flavour space.
This removes all external heavy-quark contributions, leaving only light-quark flavours as external lines.

The flavour factors obtained from this analysis coincide precisely with those derived from the combinatorial analysis of trace structures and from the graph-theoretic classification, providing a consistency check that confirms all relevant flavour structures have been systematically accounted for.

Our next step is to analyse flavour factors involving traces over flavour-group generators.
Quarks, whether light or heavy, are partially characterised by their quantum numbers, electric charges and weak isospins.
These quantum numbers naturally partition the flavour space into up-type and down-type sectors, inducing a decomposition of any given set of flavours $\mathcal{F}$ into subsets $\mathcal{U}$ and $\mathcal{D}$.
The trace of a matrix in flavour space is defined by
\begin{align}
\tr_\mathcal{F} ( \lambda ) = \sum_{f\in \mathcal{F}} \lambda_{ff} \comma
\end{align}
where $\lambda$ is an arbitrary flavour matrix.
This definition is equivalent to the standard trace over flavour matrices but allows the summation to be trivially restricted to a subset of flavours.
Using this notation, we prove the following lemma, which enables us to express traces over all flavours involving products of charge matrices and flavour generators in terms of traces over up-type flavours only.
\begin{lemma*}
Let $\mathcal{F}$ be a the disjoint union of the finite sets $\mathcal{U}$ and $\mathcal{D}$, and $\lambda$ be a traceless matrix satisfying $\tr_\mathcal{F}(\lambda) = 0$. Then, 
\begin{align}
    \sum_{\flavour \in \mathcal{F}} c_\flavour \lambda_{\flavour  \flavour} = (c_\mathcal{U}-c_\mathcal{D}) \tr_\mathcal{U} (\lambda) \quad \text{if} \quad c_\flavour  = \begin{cases}
    c_\mathcal{D}, & \text{if } \flavour  \in \mathcal{U} \comma\\
    c_\mathcal{U}, & \text{if } \flavour  \in \mathcal{D} \period
    \end{cases}
\end{align}
\end{lemma*}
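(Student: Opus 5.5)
The plan is a direct computation: split the sum over $\mathcal{F}$ along the disjoint decomposition $\mathcal{F}=\mathcal{U}\sqcup\mathcal{D}$, and use tracelessness to eliminate the down-type partial trace. No earlier result of the paper is needed beyond the definition of the restricted trace $\tr_\mathcal{F}(\lambda)=\sum_{f\in\mathcal{F}}\lambda_{ff}$ and its additivity under disjoint unions.

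First, since $\mathcal{U}\cap\mathcal{D}=\emptyset$, the definition gives
\begin{align}
  \tr_\mathcal{F}(\lambda)=\tr_\mathcal{U}(\lambda)+\tr_\mathcal{D}(\lambda) \period
\end{align}
The hypothesis $\tr_\mathcal{F}(\lambda)=0$ then yields $\tr_\mathcal{D}(\lambda)=-\tr_\mathcal{U}(\lambda)$. Second, I split the weighted sum the same way. Because $c_\flavour$ is constant on each of the two blocks, the constant pulls out of each partial sum:
\begin{align}
  \sum_{\flavour\in\mathcal{F}} c_\flavour\lambda_{\flavour\flavour}
  = c^{(\mathcal{U})}\,\tr_\mathcal{U}(\lambda) + c^{(\mathcal{D})}\,\tr_\mathcal{D}(\lambda)
  = \big(c^{(\mathcal{U})}-c^{(\mathcal{D})}\big)\tr_\mathcal{U}(\lambda) \comma
\end{align}
where $c^{(\mathcal{U})}$ and $c^{(\mathcal{D})}$ denote the constant values taken by $c_\flavour$ on $\mathcal{U}$ and on $\mathcal{D}$. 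Finiteness of the sets guarantees that all sums are finite, so the rearrangement needs no justification.

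The only real obstacle is bookkeeping the labels in the case distinction. As the statement is written, $c_\flavour=c_\mathcal{D}$ on $\mathcal{U}$ and $c_\flavour=c_\mathcal{U}$ on $\mathcal{D}$. Taken literally, this gives $c^{(\mathcal{U})}=c_\mathcal{D}$ and $c^{(\mathcal{D})}=c_\mathcal{U}$, so the computation produces $(c_\mathcal{D}-c_\mathcal{U})\tr_\mathcal{U}(\lambda)$, the negative of the displayed right-hand side. The claimed form $(c_\mathcal{U}-c_\mathcal{D})\tr_\mathcal{U}(\lambda)$ holds for the natural assignment $c_\flavour=c_\mathcal{U}$ for $\flavour\in\mathcal{U}$ and $c_\flavour=c_\mathcal{D}$ for $\flavour\in\mathcal{D}$. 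I would therefore state the proof with this assignment and remark that swapping the labels flips the overall sign.

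Finally, I would note how the lemma will be used. Take $c_\flavour$ to be a product of charges, which depend only on whether a flavour is up- or down-type, and take $\lambda=\lambda^a$ a traceless flavour generator. Then traces of the form $\tr(\prod Q\,\lambda^a)$ reduce to the charge difference times $\tr_\mathcal{U}(\lambda^a)$.
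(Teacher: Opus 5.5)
Your proof is correct and is essentially the paper's direct computation. The paper splits the sum over $\mathcal{U}\sqcup\mathcal{D}$, writes $c_\mathcal{U}=(c_\mathcal{U}-c_\mathcal{D})+c_\mathcal{D}$ on the $\mathcal{U}$ block, and recombines the $c_\mathcal{D}$ terms into $c_\mathcal{D}$ times the full trace over $\mathcal{F}$, which vanishes; you instead replace the partial trace over $\mathcal{D}$ by minus the one over $\mathcal{U}$, which is the same step with different bookkeeping.

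Your remark on the case distinction is also right. Taken literally, the stated assignment gives the opposite sign. The paper's own proof silently uses the natural assignment ($c_\mathcal{U}$ on $\mathcal{U}$, $c_\mathcal{D}$ on $\mathcal{D}$) in its first line, so the displayed cases in the statement are simply mislabelled.
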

\begin{proof} To verify this identity, one proceeds by direct calculation, using the fact that the index set of a finite sum can be partitioned into disjoint subsets. This way, one finds that
\begin{align}
    \sum_{\flavour \in \mathcal{F}} c_\flavour \lambda_{\flavour \flavour} &= \sum_{\flavour \in \mathcal{U}} c_\mathcal{U} \lambda_{\flavour \flavour} + \sum_{\flavour \in \mathcal{D}} c_\mathcal{D} \lambda_{\flavour  \flavour} \LineNoNumber
    &= \sum_{\flavour \in \mathcal{U}} \big( (c_\mathcal{U} - c_\mathcal{D}) + c_\mathcal{D} \big) \lambda_{\flavour  \flavour} + \sum_{\flavour \in \mathcal{D}} c_\mathcal{D} \lambda_{\flavour \flavour} \LineNoNumber
    &= (c_\mathcal{U} - c_\mathcal{D}) \sum_{f\in \mathcal{U}} \lambda_{\flavour \flavour} + c_\mathcal{D} \sum_{\flavour \in \mathcal{D}\sqcup \mathcal{U}}\lambda_{\flavour \flavour} \LineNoNumber
    &= (c_\mathcal{U} - c_\mathcal{D}) \sum_{\flavour \in \mathcal{U}} \lambda_{\flavour \flavour} + c_\mathcal{D} \sum_{\flavour \in \mathcal{F}}\lambda_{\flavour \flavour} \LineNoNumber
    &= (c_\mathcal{U} - c_\mathcal{D}) \tr_\mathcal{U} (\lambda) + c_\mathcal{D} \tr_\mathcal{F} (\lambda) \phantom{ \sum_{\flavour \in \mathcal{F}}} \LineNoNumber
    &= (c_\mathcal{U} - c_\mathcal{D}) \tr_\mathcal{U} (\lambda)\period  \phantom{ \sum_{\flavour \in \mathcal{F}}} 
\end{align}
In the penultimate line, we use that the matrix is assumed to be traceless.
An analogous identity, differing only by an overall sign, follows for the complementary subset upon interchanging the two subsets.
\end{proof}

The foregoing analysis of the flavour factors reveals that the hadronic invariants naturally decompose into non-singlet and pure-singlet contributions.
For external quark states, the corresponding flavour structure takes the form
\begin{align}
    T_{i} (p,q) = \sum_{a} T^a_i  (p,q) \hat{\lambda}^{a}_l + T^{\ps}_i (p,q) \hat{1}_l \period
\end{align}
External gluons require no such decomposition, as these receive only pure-singlet contributions.
To extract the non-singlet component for external quarks, we multiply by a light-flavour group generator $\lambda^a_\light$ and then take the trace of the product.
In contrast, taking the trace without the insertion of a generator directly yields the pure-singlet contribution.
Accordingly, we define the non-singlet and pure-singlet components of the hadronic invariants as
\begin{subequations}
\begin{align}
    T_{i}^{\ns} (p,q) &\equiv \sum_a \frac{1}{T_{F,\light}} \tr \big(T_i  (p,q) \hat{\lambda}^{a}_\light \big) \LineNoNumber
    &= \sum_{N} \left( \frac{1}{x} \right)^N\sum_{a} \frac{\ome{N}{a} (p) \coef{N}{i}[a] (q) }{T_{F,\light}}\comma
    \LineNumber
    T^{\text{ps}}_i  (p,q) &\equiv \frac{1}{n_{\light}}\tr \big(T_i  (p,q) \big) \LineNoNumber
    &= \sum_{N} \left( \frac{1}{x} \right)^N\sum_{k \in \set{\psq,\psg}} \frac{ \ome{N}{a} (p) \coef{N}{n_{\light}}[k] (q)}{n_{\light}} \period
\end{align}
\end{subequations}
In the non-singlet case, the coefficient functions are linear monomials in the flavour factors.
Denoting by $\mathcal{F}_{a}$ the set of all non-singlet flavour factors, this observation implies
\begin{align}
    T^{\ns}_{i} (p,q) &=  \sum_{N} \left( \frac{1}{x} \right)^N  \sum_{a} \frac{\ome{N}{a} (p) \coef{N}{i}[a] (q) }{T_{F,\light}} \LineNoNumber
    &=  \sum_N \left( \frac{1}{x} \right)^N \sum_a \sum_{F_a \in \mathcal{F}_a} \frac{A^N_a (p) F_a }{T_{F,\light}} \dv{C^N_{i,a} (q)}{F_a} \LineNoNumber
    &=  \sum_N \left( \frac{1}{x} \right)^N \left( \sum_a \frac{\tr_{\mathcal{U}} (\hat{\lambda}^a_\light) }{T_{F,\light}} A^N_a (p) \right) \left( \sum_{F_a \in \mathcal{F}_a}  \frac{F_a}{\tr_{\mathcal{U}} (\hat{\lambda}^a_\light)} \dv{C^N_{i,a} (q)}{F_a} \right) \LineNoNumber
    &\equiv  \sum_N\left( \frac{1}{x} \right)^N \ome{N}{\ns} (p) \coef{N}{i}[\ns] (q) \period
\end{align}
Here, use has been made of the fact that both the derivatives of the coefficient functions and the ratios of flavour factors are independent of the flavour generators.
It follows that the non-singlet coefficient functions $C^N_{i,\mathrm{ns}}$ do not depend on the flavour index $a$, since all traces over the light-flavour group generators cancel.

This factorisation applies to all flavour structures, and in principle we may always extract a single flavour factor and absorb it into the operator matrix element.
A common convention is to do so when the flavour factor involves a trace over two charge matrices.
This procedure, however, is restricted to the vector-vector and vector-axial structures; it does not extend to the axial-axial case, where the corresponding non-singlet flavour factor vanishes in physical processes.
In the pure-singlet sector, scaling out flavour factors would introduce an artificial asymmetry between light and heavy internal flavours, and is therefore not adopted here.
Conversion to standard conventions remains straightforward, since the differences amount only to overall multiplicative factors.

\bibliographystyle{jhep}
\bibliography{bibliography.bib}

\end{document}

%% file: commands.tex
\newcommand{\LineNoNumber}{\\ \nonumber}
\newcommand{\LineNumber}{\\}

\renewcommand{\vector}{\text{v}}
\newcommand{\axial}{\text{a}}
\newcommand{\light}{\text{l}}
\newcommand{\heavy}{\text{h}}
\newcommand{\flavour}{\text{f}}
\newcommand{\parton}{\text{p}}

\newcommand{\ren}{\text{ren}}
\newcommand{\const}{\text{const}}

\newcommand{\ps}{\text{ps}}
\newcommand{\ns}{\text{ns}}

\newcommand{\comma}{\, ,}
\newcommand{\period}{\, .}

\RenewDocumentCommand{\sp}{O{}mm}{
  \ifthenelse{\equal{#2}{#3}}
    {\def\spresult{#2^2}}
    {\def\spresult{#2 \cdot #3}}
  \ifthenelse{\equal{#1}{}}
    {\ensuremath{\spresult}}
    {\ensuremath{\left( \spresult \right)^{#1}}}
}

\newcommand{\bjorken}{x}

\newcommand{\intn}[2]{\int \dd[#2]{#1}}
\NewDocumentCommand{\integral}{m m m}{\int_{#2}^{#3} \dd{#1}}

\NewDocumentCommand{\vectors}{o m m m m}{%
  \IfNoValueTF{#1}
    {\tensor{#2}{#3{\lambda_{#4}}} \dots \tensor{#2}{#3{\lambda_{#5}}}}%
    {\tensor{#2}{#3{\{ \lambda_{#4}}} \dots \tensor{#2}{#3{\lambda_{#5}\} }} }%
}

\newcommand{\psq}{\text{q}}
\newcommand{\psc}{\text{c}}
\newcommand{\psg}{\text{g}}

\NewDocumentCommand{\coef}{m m O{}}{%
  C^{#1}_{#2\IfNoValueTF{#3}{}{,#3}}%
}

\newcommand{\mzv}[1]{\zeta_{#1}}

\newcommand{\num}[4][]
  {\text{P}_{#2,#3,#4}^{#1}}

\newcommand{\hpl}[1]{\text{H}_{#1}}

\renewcommand{\order}[1]{\mathcal{O}\left(#1\right)}

\NewDocumentCommand{\ome}{m m O{}}{%
  A^{#1}_{#2\IfNoValueTF{#3}{}{,#3}}%
}

\newcommand{\gammaf}[1]{\Gamma\left(#1\right)}

\newcommand{\adim}[2]{\gamma_{\text{#1}}^{(#2)}}

\newcommand{\bfun}[1]{\beta_{#1}}

\newcommand{\cmom}[4]{c^{N,(#3,#4)}_{#1,#2}}
\renewcommand{\bfun}[1]{\beta_{#1}}
\renewcommand{\adim}[3]{\gamma^{N,(#3)}_{#1#2}}
\newtheorem*{lemma*}{Lemma}

%% file: tables/numerics.tex
\renewcommand{\arraystretch}{1.1} 
\setlength{\LTleft}{0pt}
\setlength{\LTright}{0pt}
\begin{longtable}{@{\extracolsep{\fill}}|c|c|c|c|c|@{}}
\hline 
\multicolumn{1}{|c|}{} &
\multicolumn{2}{c|}{$c_{L,q}^{N,(2,0)}$} &
\multicolumn{2}{c|}{$c_{2,q}^{N,(2,0)}$} \\
\hline
$N$ & KMS & RSvN & KMS & RSvN \\
\hline
\hline
\endfirsthead

\hline
\multicolumn{1}{|c|}{} &
\multicolumn{2}{c|}{$c_{L,q}^{N,(2,0)}$} &
\multicolumn{2}{c|}{$c_{2,q}^{N,(2,0)}$} \\
\hline
$N$ & KMS & RSvN & KMS & RSvN \\
\hline
\hline
\endhead

2 & $-2.663968441e-01$ & $-2.6599e-01$ & $-1.819031902e+00$ & $-1.8187e+00$ \\
4 & $-2.291865727e-02$ & $-2.2898e-02$ & $-1.440499763e-01$ & $-1.4401e-01$ \\
6 & $-3.037959700e-03$ & $-3.0392e-03$ & $-2.040592390e-02$ & $-2.0418e-02$ \\
8 & $-5.184054510e-04$ & $-5.1947e-04$ & $-3.811432361e-03$ & $-3.8175e-03$ \\
10 & $-1.027971471e-04$ & $-1.0318e-04$ & $-8.290910647e-04$ & $-8.3117e-04$ \\
12 & $-2.252325062e-05$ & $-2.2641e-05$ & $-1.985751721e-04$ & $-1.9923e-04$ \\
14 & $-5.299818794e-06$ & $-5.3350e-06$ & $-5.082020292e-05$ & $-5.1024e-05$ \\
16 & $-1.315786777e-06$ & $-1.3262e-06$ & $-1.365120445e-05$ & $-1.3714e-05$ \\
18 & $-3.406331187e-07$ & $-3.4374e-07$ & $-3.804753682e-06$ & $-3.8243e-06$ \\
20 & $-9.119679388e-08$ & $-9.2130e-08$ & $-1.091674299e-06$ & $-1.0978e-06$ \\
22 & $-2.509868611e-08$ & $-2.5381e-08$ & $-3.206582357e-07$ & $-3.2258e-07$ \\
\hline
\caption{Comparison of the numerical values of the singlet coefficient function moments at NLO, 
cf.~Eq.(\ref{eq:def-coeffs}), 
for $m = 2 \sqrt{2}\, \text{GeV}$ and $Q = 7\, \text{GeV}$ such that $\kappa = 8 /49$, for the scale choice $\mu_r=\mu_f=Q$.
}
\label{tab:C2qPS} \\
\end{longtable}

\setlength{\LTleft}{0pt}
\setlength{\LTright}{0pt}
\begin{longtable}{@{\extracolsep{\fill}}|c|c|c|c|c|@{}}
\hline
\multicolumn{1}{|c|}{} &
\multicolumn{2}{c|}{$c_{L,q}^{N,\text{NS},(2,0)}$} &
\multicolumn{2}{c|}{$c_{2,q}^{N,\text{NS},(2,0)}$} \\
\hline
$N$ & KMS & RSvN & KMS & RSvN \\
\hline
\hline
\endfirsthead

\hline
\multicolumn{1}{|c|}{} &
\multicolumn{2}{c|}{$c_{L,q}^{N,\text{NS},(2,0)}$} &
\multicolumn{2}{c|}{$c_{2,q}^{N,\text{NS},(2,0)}$} \\
\hline
$N$ & KMS & RSvN & KMS & RSvN \\
\hline
\hline
\endhead

2 & $1.269706065e-01$ & $1.2691e-01$ & $3.828697684e-01$ & $3.8268e-01$ \\
4 & $7.773989042e-03$ & $7.7697e-03$ & $1.899084702e-02$ & $1.8981e-02$ \\
6 & $8.451385732e-04$ & $8.4611e-04$ & $1.953555304e-03$ & $1.9556e-03$ \\
8 & $1.218724089e-04$ & $1.2232e-04$ & $2.757969313e-04$ & $2.7672e-04$ \\
10 & $2.086984108e-05$ & $2.1002e-05$ & $4.678690020e-05$ & $4.7059e-05$ \\
12 & $4.015640039e-06$ & $4.0518e-06$ & $8.963806878e-06$ & $9.0378e-06$ \\
14 & $8.409051715e-07$ & $8.5059e-07$ & $1.873670645e-06$ & $1.8935e-06$ \\
16 & $1.878096423e-07$ & $1.9041e-07$ & $4.182500377e-07$ & $4.2360e-07$ \\
18 & $4.413049803e-08$ & $4.4839e-08$ & $9.829615433e-08$ & $9.9755e-08$ \\
20 & $1.080458841e-08$ & $1.1000e-08$ & $2.408000184e-08$ & $2.4483e-08$ \\
22 & $2.736748775e-09$ & $2.7912e-09$ & $6.104188218e-09$ & $6.2172e-09$ \\
\hline
\caption{
Same as \autoref{tab:C2qPS} for the non-singlet coefficient function moments at NLO. The values for $c_{2,q}^{N,\text{NS},(2,0)}$ in column 3 (KMS) 
are subject to the subtraction in Eq.~(\ref{eq:F2ns-subtract}).
}
\label{tab:C2qNS}\\
\end{longtable}

\setlength{\LTleft}{0pt}
\setlength{\LTright}{0pt}
\begin{longtable}{@{\extracolsep{\fill}}|c|c|c|c|c|@{}}
\hline
\multicolumn{1}{|c|}{} &
\multicolumn{2}{c|}{$c_{L,g}^{N,(1,0)}$} &
\multicolumn{2}{c|}{$c_{2,g}^{N,(1,0)}$} \\
\hline
$N$ & KMS & RSvN & KMS & RSvN \\
\hline
\hline
\endfirsthead

\hline
\multicolumn{1}{|c|}{} &
\multicolumn{2}{c|}{$c_{L,g}^{N,(1,0)}$} &
\multicolumn{2}{c|}{$c_{2,g}^{N,(1,0)}$} \\
\hline
$N$ & KMS & RSvN & KMS & RSvN \\
\hline
\hline
\endhead

2 & $8.867283709e-02$ & $8.8673e-02$ & $5.028907767e-01$ & $5.0289e-01$ \\
4 & $1.103046457e-02$ & $1.1030e-02$ & $6.175564320e-02$ & $6.1756e-02$ \\
6 & $1.917444241e-03$ & $1.9174e-03$ & $1.183561754e-02$ & $1.1836e-02$ \\
8 & $3.966816131e-04$ & $3.9668e-04$ & $2.731109006e-03$ & $2.7311e-03$ \\
10 & $9.137167103e-05$ & $9.1372e-05$ & $6.987846232e-04$ & $6.9878e-04$ \\
12 & $2.264262570e-05$ & $2.2643e-05$ & $1.909719930e-04$ & $1.9097e-04$ \\
14 & $5.916619363e-06$ & $5.9166e-06$ & $5.463077020e-05$ & $5.4631e-05$ \\
16 & $1.609557695e-06$ & $1.6096e-06$ & $1.616052219e-05$ & $1.6161e-05$ \\
18 & $4.519332213e-07$ & $4.5193e-07$ & $4.904449068e-06$ & $4.9044e-06$ \\
20 & $1.301724905e-07$ & $1.3017e-07$ & $1.518757609e-06$ & $1.5188e-06$ \\
22 & $3.829032869e-08$ & $3.8290e-08$ & $4.780440303e-07$ & $4.7804e-07$ \\
\hline
\caption{
Same as \autoref{tab:C2qPS} for the gluon coefficient function moments at leading order.
}
\label{tab:C1g}\\
\end{longtable}

\setlength{\LTleft}{0pt}
\setlength{\LTright}{0pt}
\begin{longtable}{@{\extracolsep{\fill}}|c|c|c|c|c|@{}}
\hline
\multicolumn{1}{|c|}{} &
\multicolumn{2}{c|}{$c_{L,g}^{N,(2,0)}$} &
\multicolumn{2}{c|}{$c_{2,g}^{N,(2,0)}$} \\
\hline
$N$ & KMS & RSvN & KMS & RSvN \\
\hline
\hline
\endfirsthead

\hline
\multicolumn{1}{|c|}{} &
\multicolumn{2}{c|}{$c_{L,g}^{N,(2,0)}$} &
\multicolumn{2}{c|}{$c_{2,g}^{N,(2,0)}$} \\
\hline
$N$ & KMS & RSvN & KMS & RSvN \\
\hline
\hline
\endhead

2 & $3.378659570e+00$ & $3.4037e+00$ & $1.606527957e+01$ & $1.6121e+01$ \\
4 & $7.289865120e-01$ & $7.3158e-01$ & $3.680483878e+00$ & $3.6872e+00$ \\
6 & $1.676777828e-01$ & $1.6809e-01$ & $9.457069517e-01$ & $9.4677e-01$ \\
8 & $4.148549496e-02$ & $4.1576e-02$ & $2.623839998e-01$ & $2.6259e-01$ \\
10 & $1.087276123e-02$ & $1.0897e-02$ & $7.663126763e-02$ & $7.6681e-02$ \\
12 & $2.977571974e-03$ & $2.9849e-03$ & $2.319720217e-02$ & $2.3211e-02$ \\
14 & $8.436717814e-04$ & $8.4599e-04$ & $7.208613114e-03$ & $7.2126e-03$ \\
16 & $2.455922015e-04$ & $2.4633e-04$ & $2.285220881e-03$ & $2.2865e-03$ \\
18 & $7.307379587e-05$ & $7.3315e-05$ & $7.358664768e-04$ & $7.3627e-04$ \\
20 & $2.213903270e-05$ & $2.2218e-05$ & $2.399582635e-04$ & $2.4009e-04$ \\
22 & $6.809936693e-06$ & $6.8358e-06$ & $7.906076590e-05$ & $7.9104e-05$ \\
\hline
\caption{
Same as \autoref{tab:C2qPS} for the gluon coefficient function moments at NLO.
} 
\label{tab:C2g}\\
\end{longtable}